\newtheorem{theorem}{Theorem}
\newtheorem{proposition}[theorem]{Proposition}
\newtheorem{corollary}[theorem]{Corollary}
\newtheorem{lemma}[theorem]{Lemma}
\newtheorem{remark}[theorem]{Remark}
\newcommand{\be}{\begin{equation}}
\newcommand{\ee}{\end{equation}}
\newcommand{\bea}{\begin{eqnarray}}
\newcommand{\eea}{\end{eqnarray}}
\newcommand{\ba}{\begin{array}}
	\newcommand{\ea}{\end{array}}
\newcommand{\bean}{\begin{eqnarray*}}
	\newcommand{\eean}{\end{eqnarray*}}
\newcommand{\bt}{\bar{t}}
\newcommand{\tone}{t^{(1)}}
\newcommand{\ttwo}{t^{(2)}}
\numberwithin{equation}{section}
\renewcommand{\bt}{\bm{t}}
\begin{document}
\title{Toda Darboux transformations and vacuum expectation values }
\author{Chengwei Wang$^1$, Mengyao Chen$^1$, Jipeng Cheng$^{1,2*}$}
\dedicatory {$^1$ School of Mathematics, China University of
Mining and Technology, Xuzhou, Jiangsu 221116, P.\ R.\ China\\
$^2$ Jiangsu Center for Applied Mathematics (CUMT), Xuzhou, Jiangsu 221116, P.\ R.\ China}
\thanks{*Corresponding author. Email: chengjp@cumt.edu.cn, chengjipeng1983@163.com.}
\begin{abstract}
Determinant formulas for vacuum expectation values 
$\langle s+k+n-m,-s|e^{H(\bt)}\beta_m^{*}\cdots\beta_1^{*}\beta_n\cdots\beta_1g|k\rangle $ are given by using Toda Darboux transformations. Firstly notice that 2--Toda hierarchy can be viewed as the 2--component bosonizations of fermionic KP hierarchy, then two elementary Toda Darboux transformation operators $T_{+}(q)=\Lambda(q)\cdot\Delta\cdot q^{-1}$ and
$T_{-}(r)=\Lambda^{-1}(r)^{-1}\cdot\Delta^{-1}\cdot r$ are constructed from the changes of Toda (adjoint) wave functions by using 2--component boson--fermion correspondence. Based on this, the above vacuum expectation values now can be realized as the successive applications of Toda Darboux transformations. So the corresponding determinant formulas can be derived from the determinant representations of Toda Darboux transformations. Finally by similar methods, we also give the determinant formulas for $\langle n-m|e^{\mathcal{H}(\bm{x})}\beta_m^{*}\cdots\beta_1^{*}\beta_n\cdots\beta_1g|k\rangle $ related with KP tau functions. \\
\textbf{Keywords}:  \ Toda hierarchy,\ \ Darboux transformation,\ \ vacuum expectation values,\ \ tau function.\\
\textbf{MSC 2020}: 35Q53, 37K10, 37K40\\
\textbf{PACS}: 02.30.Ik
\end{abstract}

\maketitle

\section{Introduction}
Free fermions \cite{Alexandrov2013,Jimbo1983,Ten1991,Kac2003,Miwa2000,Kac1998,Kac2013,Kac1990} have been widely used in theoretical and mathematical physics, representations of infinite dimensional Lie algebras and other areas.  
Here we would like to discuss the charged free fermions \cite{Alexandrov2013,Miwa2000,Jimbo1983,Kac2013} $\psi_i$ and $\psi_i^{*}$ satisfying
\begin{equation}
	\psi_i\psi_j+\psi_j\psi_i=\psi_i^{*}\psi_j^{*}+\psi_j^{*}\psi_i^{*}=0,\ \psi_i\psi_j^{*}+\psi_j^{*}\psi_i=\delta_{ij}.\label{psirelation}
\end{equation}
And the fermionic Fock space $\mathcal{F}$ and its dual space $\mathcal{F}$ are defined by $\mathcal{F}=\mathcal{A}|0\rangle$ and $\mathcal{F}^{*}=\langle 0|\mathcal{A}$
with Clifford algebra $\mathcal{A}$ generated by $\psi_i$, $\psi_i^{*}$ and $1$. Here vacuums $|0\rangle$ and $\langle0|$  are defined by
\begin{align}
	&\psi_i|0\rangle=0\ (i<0),\ \psi_i^{*}|0\rangle=0 \ (i\geq0),\notag\\
	&\langle 0|\psi_i=0\ (i\geq0),\ \langle0| \psi_i^{*}=0\ (i<0).\label{vacuumcondition}
\end{align}
If define  $ \text{charge of}\ \psi_i=1$ and $\text{charge of}\ \psi_i^{*}=-1,$
 then $\mathcal{A}=\oplus_{k\in\mathbb{Z}}\mathcal{A}_k$ with $\mathcal{A}_k$ having charge  $k$, $\mathcal{F}=\oplus_{k\in\mathbb{Z}}\mathcal{F}_k$ and $\mathcal{F}^{*}=\oplus_{k\in\mathbb{Z}}\mathcal{F}_k^{*}$
 with $\mathcal{F}_k=\mathcal{A}_k|0\rangle=\mathcal{A}_0|k\rangle$ and $\mathcal{F}^*_k=\langle 0|\mathcal{A}_k=\langle k| \mathcal{A}_0 $,
 where $|k\rangle=\Psi_k|0\rangle$ and $\langle k|=\langle0|\Psi_k^{*}$ with
 $\Psi_{k}=\psi_{k}^{*}\dots\psi_{-1}^{*}$, $\Psi_{k}^*=\psi_{-1}\dots\psi_{k}$ for $k<0$, $\Psi_{k}=\psi_{k-1}\dots\psi_{0}$, $\Psi_{k}^*=\psi_{0}^{*}\dots\psi_{k-1}^{*}$ for $k>0$ and $\Psi_{0}=\Psi_{0}^*=1$.

For $\langle 0|a\in\mathcal{F}^*$ and $b|0\rangle\in\mathcal{F}$, the paring is denoted by $\langle 0|ab|0\rangle$, which can be determined by \eqref{psirelation} \eqref{vacuumcondition} and $\langle 0|1|0\rangle=1$. Here $\langle 0|a|0\rangle$ is called the vacuum expectation value (VEV) of $a\in\mathcal{A}$, which plays an important role in theoretical and mathematical physics \cite{Harnad2021,Harnad2006,Harnadarxiv,Alexandrov2020,
Harnad20212,Wangzy2023}. For general VEV, one can compute by Wick theorem \cite{Jimbo1983,Miwa2000} given below. 
For $v_1,\cdots v_r\in \bigoplus_{i\in\mathbb{Z}}\mathbb{C}\psi_i\oplus
\bigoplus_{i\in\mathbb{Z}}\mathbb{C}\psi_i^*$,
\begin{equation*}
\langle 0| v_1\cdots v_r|0\rangle=
\left\{
  \begin{array}{ll}
  0,&\text{$r$ is odd},\\
  \sum_{\eta\in S_n}{\rm sign}(\eta)\langle 0| v_{\eta(1)}v_{\eta(2)}|0\rangle\cdots\langle0| v_{\eta(r-1)}v_{\eta(r)}|0\rangle,&\text{$r$ is even},
\end{array}
\right.
\end{equation*}
where $\eta$ is the permutation such that $\eta(1)<\eta(2),\cdots,\eta(r-1)<\eta(r)$ and $ \eta(1)<\eta(3)<\cdots<\eta(r-1)$. Though by Wick theorem one can theoretically compute all VEVs, we do not believe Wick theorem is very applicable in practice. Actually vacuum expectation values are usually connected with some determinants, which can be better understood, compared with Wick theorem. Note that the general case $\langle 0| v_1\cdots v_r|0\rangle$ can be written into the sum of the following forms of VEVs
\begin{align}\label{kpvev}
\langle n-m|\beta_m^{*}\cdots\beta_1^{*}\beta_n\cdots\beta_1|0\rangle,\quad 
\beta_j\in \oplus_{i\in\mathbb{Z}}\mathbb{C}\psi_i,\quad\beta_j^*\in \oplus_{i\in\mathbb{Z}}\mathbb{C}\psi_i^*,\quad m,n\in\mathbb{Z}_{\geq 0},
\end{align}
since $\langle 0|a|0\rangle=0$ for $a\in\mathcal{A}_k$ with $k\neq 0$. To our best knowledge, there are determinant formulas for cases $n=m$, $n=0$ and $m=0$, which can be found in \cite{Alexandrov2013}, while for general cases, there are still no similar results. 

In order to make our discussion more typical, we would like to discuss the following VEV
\begin{align}\label{Todavev}
\langle s+k+n-m,-s|e^{H(\bt)}\beta_m^{*}\cdots\beta_1^{*}\beta_n\cdots\beta_1g|k\rangle,\quad
s,k\in\mathbb{Z},\quad n,m\in\mathbb{Z}_{\geq 0},
\end{align}
where  $\bm{t}=\left(t^{(1)},t^{(2)}\right)$, $t^{(a)}=(t_1^{(a)},t_2^{(a)},\cdots)$, $\langle l_1, l_2|=\langle 0 |\Psi_{l_1}^{(1)*}\Psi_{l_2}^{(2)*}$ with $\Psi_{l}^{(a)*}$ defined by replacing $\psi_i$ and $\psi_i^*$ in $\Psi_{l}$ with
$\psi_i^{(a)}$ and $\psi_i^{(a)*}$, $g\in GL_\infty=\{e^{X_1}\cdots e^{X_k}|X_i=\sum c_{kl}^{(i)}\psi_k\psi_l^{*}\}$  and $$H(\bm{t})=\sum_{a=1,2}\sum_{j\geq0}\sum_{k\in\mathbb{Z}}t_j^{(a)}:\psi_{k}^{(a)}\psi_{k+j}^{(a)*}:$$  with $:\psi_{i}^{(a)}\psi_j^{(b)*}:=\psi_{i}^{(a)}\psi_j^{(b)*}$ for $j\geq 0$ and $:\psi_i^{(a)}\psi_j^{(b)*}:=-\psi_j^{(b)*}\psi_i^{(a)}$ for $j<0$. Here $\psi_i^{(a)}$ and $\psi_j^{(b)*}$ comes from 
the rearrangement of labels for $\psi_i$ and $\psi_j^{*}$ such that 
\begin{align*}
\psi_i^{(a)}\psi_j^{(b)}+\psi_j^{(b)}\psi_i^{(a)}
=\psi_i^{(a)*}\psi_j^{(b)*}+\psi_j^{(b)*}\psi_i^{(a)*}=0,\quad \psi_i^{(a)}\psi_j^{(b)*}+\psi_j^{(b)*}\psi_i^{(a)}=\delta_{ab}\delta_{ij}.
\end{align*} 
For example, we can take $\psi_{j}^{(1)}=\psi_{2j}$, $\psi_{j}^{(2)}=\psi_{2j+1}$, $\psi_{j}^{(1)*}=\psi_{2j}^{*}$, 
$\psi_{j}^{(2)*}=\psi_{2j+1}^{*}$. We would remark that VEV \eqref{Todavev} corresponds to tau functions of 2--Toda hierarchy\cite{Takasaki2018,Jimbo1983}, which is the first non--trivial multi--component generalization of KP hierarchy, while general multi--component KP \cite{Kac2003,Kac1998} is corresponding to VEV
\begin{align}\label{p-KPvev}
\langle l_1,l_2,\cdots,l_p|a|0\rangle,\quad l_i\in\mathbb{Z},\quad a\in\mathcal{A}_{k},\quad k=\sum_{i=1}^pl_i,
\end{align}
where $\langle l_1,l_2,\cdots,l_p|$ is defined similarly to $\langle l_1,l_2|$. In what follows, Toda hierarchy always means 2--Toda hierarchy. 
Therefore if we can firstly understand VEV \eqref{Todavev} well, then it will be helpful to investigate VEV \eqref{p-KPvev} which also contains VEV \eqref{kpvev} as one special case.

To derive determinant formulas for  VEV \eqref{Todavev}, our strategy is to use Toda Darboux transformation to interpret VEV \eqref{Todavev}. Explicitly, 2--Toda hierarchy can be viewed as 2--component bosonization of the fermionic KP \cite{Alexandrov2013,Jimbo1983,Kac2003,Kac1998} hierarchy 
\begin{equation}\label{ferKP}
	S(\tau\otimes\tau)=0,\quad S=\sum_{j\in\mathbb{Z}}\psi_j\otimes\psi_j^{*},\quad \tau\in\mathcal{F}.
\end{equation}
One can refer to Section \ref{sec:2} for more details. In \cite{Liu2024}, it is proved that there exits a unique $k\in\mathbb{Z}$ such that $\tau\in\mathcal{F}_k$ for any $\tau\in\mathcal{F}$ satisfying the fermionic KP hierarchy. Further we can find $\tau=g|k\rangle$ with $g\in GL_{\infty}$ (see Chapter 9 in \cite{Miwa2000}).
For any	$\beta\in\oplus_{j\in\mathbb{Z}}\mathbb{C}\psi_{j}$ and $\beta^{*}\in\oplus_{j\in\mathbb{Z}}\mathbb{C}\psi_{j}^{*}$, if we denote \cite{Yang2022,Willox1998}
\begin{align*}
  \tau^{[1]}=\beta\tau\in\mathcal{F}_{k+1}\quad\text{or}
  \quad\beta^{*}\tau\in\mathcal{F}_{k-1},
\end{align*}
then $\tau^{[1]}$ is also the solution of fermionic KP hierarchy, that is $S(\tau^{[1]}\otimes\tau^{[1]})=0$. We will call the transformation $\tau\rightarrow \tau^{[1]}$ to be elementary Darboux transformation of fermionic KP hierarchy. Then
\begin{align}\label{multi-toda-tau-trans}
  \beta_m^{*}\cdots\beta_1^{*}\beta_n\cdots\beta_1\tau\in \mathcal{F}_{k+n-m}
\end{align}
is also the solution of fermionic KP hierarchy, which can be viewed as the multi--step Darboux transformation for fermionic KP. 
Then by 2--component boson--fermion correspondence, we can obtain VEV \eqref{Todavev}, which is corresponding to the transformations for Toda tau functions.  
The next goal is to derive the transformations for Toda Lax operators or Toda dressing operators corresponding to transformations \eqref{multi-toda-tau-trans} in tau functions, so that we can use the determinant formulas for Toda Darboux transformations to derive the ones for VEV \eqref{Todavev}. For this, we find in this paper elementary Toda Darboux transformation operators $$T_{+}(q)=\Lambda(q)\cdot\Delta\cdot q^{-1},\quad T_{-}(r)=\Lambda^{-1}(r)^{-1}\cdot\Delta^{-1}\cdot r$$
on Toda Lax or dressing operators, corresponding to transformations $\tau\rightarrow \beta\tau$ or $\tau\rightarrow \beta^*\tau$ respectively, where the key is the changes of Toda (adjoint) wave functions. Then based upon this, determinant formulas for VEV \eqref{Todavev} can be derived.  Similarly, we also use KP Darboux transformation to derive the determinant formulas for $\langle n-m|e^{\mathcal{H}(\bm{x})}\beta_m^{*}\cdots\beta_1^{*}\beta_n\cdots\beta_1g|k\rangle $ related with KP tau functions.

This paper is organized in the way below. In Section \ref{sec:2}, we review the construction of 2--Toda hierarchy from fermionic KP hierarchy by 2--component boson--fermion correspondence. Further in Section \ref{sec:3}, Darboux transformations for the whole 2--Toda hierarchy are established from the changes of Toda wave functions under the transformations of Toda tau functions. After that in Section \ref{sec:4}, determinant formulas of vacuum expectation values are given by using Toda Darboux transformations. At last, some conclusions and discussions are given in Section \ref{sec:5}.    

\section{Toda hierarchy and its fermionic construction}\label{sec:2}
In this section, we will review the 2--component boson--fermion correspondence\cite{Jimbo1983,Ten1991,Kac2003} and construct Toda hierarchy from fermionic KP hierarchy. For more details on Toda hierarchy, one can refer to \cite{Ueno1984,Takasaki2018}.

Firstly 2--component boson--fermion correspondence $\sigma_{Q,\bm{t}}: \mathcal{F}\cong\mathcal{B}=\bigoplus_{l_1,l_2\in\mathbb{Z}}\mathcal{C}Q_1^{l_1}Q_2^{l_2}\otimes\mathbb{C}\left[ \bm{t}\right]$
is defined by 
	\begin{equation*}\label{bos-fer}
		\sigma_{Q, \bm{t}}(a|0 \rangle )=\sum_{l_1, l_2\in\mathbb{Z}}(-1)^{l_1l_2}Q_{1}^{l_1}Q_{2}^{l_2}\langle l_1, l_2|e^{H(\bm{t})}a|0\rangle,\
		a\in\mathcal{A},
	\end{equation*}
where $Q_i$  commutes with $t_j^{(a)}$ satisfying $Q_1Q_2=-Q_2Q_1$.
If introduce fermionic fields 
\begin{equation}\label{defpsi(1)and*}
\psi^{(a)}(z)=\sum_{j\in\mathbb{Z}}\psi_j^{(a)}z^j,\ 
\psi^{(a)*}(z)=\sum_{j\in\mathbb{Z}}\psi_j^{(a)*}z^{-j},\ a=1,2,
\end{equation}
then
\begin{align*}
	&\sigma_{Q,\bm{t}}\psi^{(a)}(z)\sigma^{-1}_{Q,\bm{t}}=Q_az^{Q_a\partial_{Q_a}}e^{\xi(t^{(a)},z)}e^{-\xi(\widetilde{\partial^{(a)}},z^{-1})},\\
	&\sigma_{Q,\bm{t}}\psi^{(a)*}(z)\sigma^{-1}_{Q,\bm{t}}=Q_a^{-1}z^{1-Q_a\partial_{Q_a}}e^{-\xi(t^{(a)},z)}e^{\xi(\widetilde{\partial^{(a)}},z^{-1})},
\end{align*}
where $\xi(t^{(a)},z)=\sum_{j\geq1}t^{(a)}_jz^j$ and $\widetilde{\partial^{(a)}}=(\partial_{t_1^{(a)}},\partial_{t_2^{(a)}}/2,\cdots)$.
Moreover, we have the following relations \cite{Jimbo1983},
	\begin{equation*}%\label{important}
		\begin{aligned}
			&\langle l_1, l_2|\psi^{(1)}(z)e^{H(\bm{t})}=(-1)^{l_2}z^{l_1-1}\langle l_1-1, l_2|e^{H(\bm{t}-[z^{-1}]_1)}, \\
			&\langle l_1, l_2|\psi^{(2)}(z)e^{H(\bm{t})}=z^{l_2-1}\langle l_1, l_2-1|e^{H(\bm{t}-[z^{-1}]_2)}, \\	
			&\langle l_1, l_2|\psi^{(1)*}(z)e^{H(\bm{t})}=(-1)^{l_2}z^{-l_1}\langle l_1+1, l_2|e^{H(\bm{t}+[z^{-1}]_1)}, \\
			&\langle l_1, l_2|\psi^{(2)*}(z)e^{H(\bm{t})}=z^{-l_2}\langle l_1, l_2+1|e^{H(\bm{t}+[z^{-1}]_2)},	
		\end{aligned}
	\end{equation*}
where $[z^{-1}]=(z^{-1},z^{-2}/2,\cdots,z^{-n}/n,\cdots)$, $\bm{t}\pm[z^{-1}]_1=(t^{(1)}\pm[z^{-1}],t^{(2)})$ and $\bm{t}\pm[z^{-1}]_2=(t^{(1)},t^{(2)}\pm[z^{-1}])$.

After the above preparation, now we can rewrite fermionic KP hierarchy $S(\tau\otimes\tau)=0$ with $\tau=g|k\rangle$ into
\begin{equation}\label{respz}
{\rm Res}_zz^{-1}\Big(	\sum_{a=1}^{2}\psi^{(a)}(z)\otimes\psi^{(a)*}(z) \Big)\left(g|k\rangle\otimes g|k\rangle \right)=0,
\end{equation}
where ${\rm Res}_z\sum_ia_iz^{i}=a_{-1}$. Therefore, if applying $\langle s+k+1,-s|e^{H(\bm{t})}\otimes\langle s^{\prime}+k-1,-s^{\prime}|e^{H(\bm{t}^{\prime})} $ to \eqref{respz}, we have
\begin{equation}\label{bilinearequation}
	\begin{aligned}
		&\oint_{C_R}\frac{{\rm d}z}{2\pi i}\tau(s,\bm{t}-[z^{-1}]_1)\tau(s^{\prime},\bt^{\prime}+[z^{-1}]_1)z^{s-s^{\prime}}e^{\xi({\tone}-t^{\prime}, z)}\\
		=&\oint_{C_r}\frac{{\rm d}z}{2\pi i}\tau(s+1,\bt-[z]_2)\tau(s^{\prime}-1,\bt^{\prime}+[z]_2)z^{s-s^{\prime}}e^{\xi({{\ttwo}}-{{\ttwo}}^{\prime}, z^{-1})}, 
	\end{aligned}
\end{equation}
where $C_R$ means the anticlockwise circle $|z|=R$ for sufficient large $R$, while $C_r$ means the anticlockwise circle
$|z|=r$ for sufficient small $r$. And the tau function $\tau(s,\bm{t})$ is defined by
\begin{equation*}%\label{deftaufunction}
	\tau(s,\bt)=(-1)^{\frac{s(s-1)}{2}}\langle s+k, -s|e^{H(\bt)}g|k\rangle.
\end{equation*}
If introduce wave function $\psi_i(s,\bm{t},z)$ and adjoint wave function $\psi_i^{*}(s,\bm{t},z)$ by
\begin{equation}\label{defwav}
	\begin{aligned} 
		&\psi_1(s,\bt, z)=\frac{ \tau(s,\bt-[z^{-1}]_1) }{ \tau(s,\bt)  }z^{s}e^{\xi({\tone}, z)}=(-1)^{s}z^{-k}\frac{\langle s+k+1, -s|e^{H(\bt)}\psi^{(1)}(z)g|k\rangle}{\langle s+k, -s |e^{H(\bt)}g|k\rangle},  \\
		&\psi_2(s,\bt, z)=  \frac{ \tau(s+1,\bt-[z]_2) }{ \tau(s,\bt)}z^{s}e^{\xi({{\ttwo}}, z^{-1})}=(-1)^{s}z^{-1}\frac{\langle s+k+1, -s|e^{H(\bt)}\psi^{(2)}(z^{-1})g|k\rangle}{\langle s+k, -s |e^{H(\bt)}g|k\rangle},\\
		&\psi_1^{*}(s,\bt, z)=\frac{ \tau(s+1,\bt+[z^{-1}]_1) }{ \tau(s+1,\bt)  }z^{-s}e^{-\xi({\tone}, z)}=(-1)^{s+1}z^{k}\frac{\langle s+k, -s-1|e^{H(\bt)}\psi^{(1)*}(z)g|k\rangle}{\langle s+k+1, -s-1|e^{H(\bt)}g|k\rangle}, \\
		&\psi_2^{*}(s,\bt, z)=\frac{ \tau(s,\bt+[z]_2) }{ \tau(s+1,\bt)  }z^{-s}e^{-\xi({{\ttwo}}, z^{-1})}=(-1)^{s}z\frac{\langle s+k, -s-1|e^{H(\bt)}\psi^{(2)*}(z^{-1})g|k\rangle}{\langle s+k+1, -s-1 |e^{H(\bt)}g|k\rangle},
	\end{aligned}
\end{equation}
then \eqref{bilinearequation} will become
\begin{equation}\label{wavbilinear}
	\oint_{C_R}\frac{{\rm d}z}{2\pi iz}\psi_1(s,\bt,z)\psi_1^{*}(s^{\prime},\bt^{\prime},z)=	\oint_{C_r}\frac{{\rm d}z}{2\pi iz}\psi_2(s,\bt,z)\psi_2^{*}(s^{\prime},\bt^{\prime},z).
\end{equation}

Further introduce wave operators $S_i$, $\widetilde{S}_i$, $W_i$ and $\widetilde{W}_i$ as follows
\begin{align*}
	&\qquad \quad S_1=1+\sum_{i=1}^{\infty}f_i(s,\bm{t})\Lambda^{-i},\
	S_2=g_0(s,\bm{t})+\sum_{i=1}^{\infty}g_i(s,\bm{t})\Lambda^{i},\\
	&\qquad\quad\widetilde{S}_1=1+\sum_{i=1}^{\infty}\widetilde{f}_i(s,\bt)\Lambda^{i},\ \ \
	\widetilde{S}_2=\widetilde{g}_0(s,\bt)+\sum_{i=1}^{\infty}\widetilde{g}_i(s,\bt)\Lambda^{-i},\\
		&W_1=S_1e^{\xi(t^{(1)},\Lambda)},\  W_2=S_2e^{\xi(\ttwo,\Lambda^{-1})},\
	\widetilde{W}_1=\widetilde{S}_1e^{-\xi(\tone,\Lambda^{-1})},\
	\widetilde{W}_2=\widetilde{S}_2e^{-\xi(\ttwo,\Lambda)},
\end{align*}
by requiring 
\begin{equation}\label{defpW}
	\psi_i(s,\bt,z)=W_i(z^{s}),\ \psi_i^{*}(s,\bt,z)=\widetilde{W}_i(z^{-s}),
\end{equation}
where $\Lambda$ is shift operator defined by  $\Lambda(f(s))=f(s+1)$. Then
one can find $\widetilde{S}_i=(S_i^{-1})^{*}$  with $(\sum_ia_i(s)\Lambda^i)^{*}=\sum_i\Lambda^{-i}a_i(s)$ and
\begin{align*}
	&\partial_{t_{p}^{(1)}}S_1=-\left(S_1\Lambda^{p}S_1^{-1} \right)_{<0}S_1,\ \partial_{t_{p}^{(2)}}S_1=\left(S_2\Lambda^{-p}S_2^{-1} \right)_{<0}S_1,\\
	&\partial_{t_{p}^{(1)}}S_2=\left(S_1\Lambda^{p}S_1^{-1} \right)_{\geq0}S_2,\ \ \ \ \partial_{t_{p}^{(2)}}S_2=-\left(S_2\Lambda^{-p}S_2^{-1} \right)_{\geq0}S_2.
\end{align*}
Base upon these relations, we have
\begin{equation}\label{parpi}
	\begin{aligned}
		&\partial_{t_p^{(1)}}\psi_i=\left((L_1^{p})_{\geq0}\right)(\psi_i),\quad \ \ \  \partial_{t_p^{(2)}}\psi_i=\left((L_2^{p})_{<0}\right)(\psi_i),\\
		&\partial_{t_p^{(1)}}\psi_i^{*}=-\left((L_1^{p})_{\geq0}\right)^{*}(\psi_i^{*}),\ \partial_{t_p^{(2)}}\psi_i^{*}=-\left((L_2^{p})_{<0}\right)^{*}(\psi_i^{*}).
	\end{aligned}
\end{equation}

Next if denote Lax operators by
\begin{equation*}%\label{defLS}
	L_1=S_1\Lambda S_1^{-1}=\Lambda+\sum_{i=1}^{\infty}u_i(s,\bm{t})\Lambda^{1-i},\ L_2=S_2\Lambda^{-1}S_2^{-1}=\sum_{i=0}^{\infty}v_i(s,\bt)\Lambda^{i-1},
\end{equation*}
then we have the following Lax equation
\begin{equation}\label{laxequ}
	\begin{aligned}
		&\partial_{t_p^{(1)}}L_1=\left[(L_1^{p})_{\geq0},L_1\right],\ \partial_{t_p^{(2)}}L_1=\left[(L_2^{p})_{<0},L_1\right],\\
		&\partial_{t_p^{(1)}}L_2=\left[(L_1^{p})_{\geq0},L_2\right],\ \partial_{t_p^{(2)}}L_2=\left[(L_2^{p})_{<0},L_2\right],\end{aligned}
\end{equation}
where  $\left(\sum_{j\in\mathbb{Z}}f_j\Lambda^{j}\right)_{\geq0}=\sum_{j\geq0}f_j\Lambda^j$ and $\left(\sum_{j\in\mathbb{Z}}f_j\Lambda^j\right)_{<0}=\sum_{j<0}f_j\Lambda^j$. Conversely, starting from Lax equation \eqref{laxequ}, we can prove $S_i$ and $W_i$ exist. Further the bilinear equation with wave functions \eqref{wavbilinear} can be obtained and finally we can show the tau functions exists. One can refer to \cite{Takasaki2018,Ueno1984} for more details. Therefore, all the above formulations for Toda hierarchy are equivalent.   

In what follows, we will try to find operator $T$ such that $L_i^{[1]}=TL_iT^{-1}$  still satisfies the Toda Lax equation \eqref{laxequ}, where $T$ is called the Toda Darboux transformation operator.  The key step is to find the changes in (adjoint) wave functions. If the transformed wave function $\psi_i^{[1]}$ or adjoint wave function $\psi_i^{*[1]}$
can be respectively expressed by difference operators $A$ and $B$ in the way below
\begin{equation}\label{keystep}
	\psi^{[1]}_i(s,\bt,z)=A\Big(z^{l_i}\psi_i(s,\bt,z)\Big),\
	\psi_i^{*[1]}(s,\bt,z)=B\Big(z^{m_i}\psi_i^{*}(s,\bt,z)\Big),
\end{equation}
where $l_i$ and $m_i$ are some integers such that $\psi_i^{[1]}$ and $\psi_i^{*[1]}$ have the same expansions in $z$ as $\psi_i$ and $\psi_i^{*}$ respectively.
Then by \eqref{defpW}, one can find $$W_i^{[1]}=AW_i \Lambda^{l_i},\
\widetilde{W}^{[1]}_i=B\widetilde{W}_i\Lambda^{-m_i},
$$ which implies  that $T=A$ or $T=(B^{-1})^{*}$.
We will show more details in the following section.

\section{Darboux transformations of Toda hierarchy} \label{sec:3}
In this section, we will firstly investigate elementary Toda  Darboux transformation based on (adjoint) wave functions. Then we prove transformed (adjoint) eigenfunction defined by fermionic transformation can be expressed by corresponding Toda Darboux transformation. 
Finally, we generalize to multi--step transformation. 

By using 2--component boson--fermion correspondence $\sigma_{Q,\bt}$, the transformed tau functions are given in the following way,
\begin{itemize}
	\item for $\tau^{[1]}=\beta\tau$,
	\begin{equation*}%\label{tau11}	
    \tau^{[1]}(s,\bt)=(-1)^{\frac{s(s-1)}{2}}\langle s+k+1,-s|e^{H(\bt)}\beta g|k\rangle=(-1)^{s}q(s,\bt)\tau(s,\bt),
	\end{equation*}
\end{itemize}
\begin{itemize}
	\item for $\tau^{[1]}=\beta^{*}\tau$,
	\begin{equation*}%\label{tau12}
		\tau^{[1]}(s,\bt)=(-1)^{\frac{s(s-1)}{2}}\langle s+k-1,-s|e^{H(\bt)}\beta^{*} g|k\rangle=(-1)^{s}r(s-1,\bt)\tau(s,\bt),
	\end{equation*}
\end{itemize}
where $q$ and $r$ are defined by
\begin{equation}\label{defqr}
	q(s,\bt)=(-1)^{s}\frac{\langle s+k+1, -s|e^{H(\bt)}\beta g|k\rangle}{\langle s+k, -s|e^{H(\bt)}g|k\rangle},\ 
	r(s,\bt)=(-1)^{s+1}\frac{\langle s+k, -s-1|e^{H(\bt)}\beta^{*}g|k\rangle}{\langle s+k+1, -s-1|e^{H(\bt)}g|k\rangle}.
\end{equation}
Then by comparing with \eqref{defwav}, the corresponding transformed (adjoint) wave functions can be written as
\begin{itemize}
	\item for $\tau^{[1]}=\beta\tau$,
	\begin{equation}\label{traadjwav1}
			\begin{aligned}
			&\psi_1^{[1]}(s,\bt,z)=(-1)^{s}z^{-k-1}\frac{\langle s+k+2,-s|e^{H(\bt)}\psi^{(1)}(z)\beta g|k\rangle }{\langle s+k+1,-s|e^{H(\bt)}\beta g|k\rangle }=\frac{q(s,\bt-[z^{-1}]_1) }{q(s,\bt) }\psi_1(s,\bt,z),\\
			&\psi_2^{[1]}(s,\bt,z)=(-1)^{s}z^{-1}\frac{\langle s+k+2,-s|e^{H(\bt)}\psi^{(2)}(z^{-1})\beta g|k\rangle
			 }{\langle s+k+1,-s|e^{H(\bt)}\beta g|k\rangle }=-\frac{q(s+1,\bt-[z]_2) }{q(s,\bt) }\psi_2(s,\bt,z),
			\\
			&\psi_1^{*[1]}(s,\bt,z)=(-1)^{s+1}z^{k+1}\frac{\langle s+k+1, -s-1|e^{H(\bt)}\psi^{(1)*}(z)\beta g|k\rangle}{\langle s+k+2, -s-1 |e^{H(\bt)}\beta g|k\rangle}=\frac{q(s+1,\bt+[z^{-1}]_1)}{q(s+1,\bt)}\psi_1^{*}(s,\bt, z), \\
			&\psi_2^{*[1]}(s,\bt,z)=(-1)^{s}z\frac{\langle s+k+1, -s-1|e^{H(\bt)}\psi^{(2)*}(z^{-1})\beta g|k\rangle}{\langle s+k+2, -s-1 |e^{H(\bt)}\beta g|k\rangle}=-\frac{q(s,\bt+[z]_2)}{q(s+1,\bt)}\psi_2^{*}(s,\bt, z),
		\end{aligned}
	\end{equation}
\end{itemize}
\begin{itemize}
	\item for $\tau^{[1]}=\beta^{*}\tau$,
	\begin{equation}\label{traadjwav2}
		\begin{aligned}
			&\psi_1^{[1]}(s,\bt, z)=(-1)^{s}z^{-k+1}\frac{\langle s+k, -s|e^{H(\bt)}\psi^{(1)}(z)\beta^{*} g|k\rangle}{\langle s+k-1, -s |e^{H(\bt)} \beta^{*} g|k\rangle}=\frac{r(s-1,\bt-[z^{-1}]_1)}{r(s-1,\bt)}\psi_1(s,\bt,z), \\
		&\psi_2^{[1]}(s,\bt, z)=(-1)^{s}z^{-1}\frac{\langle s+k, -s|e^{H(\bt)}\psi^{(2)}(z^{-1})\beta^{*} g|k\rangle}{\langle s+k-1, -s |e^{H(\bt)}\beta^{*} g|k\rangle}=-\frac{r(s,\bt-[z]_2)}{r(s-1,\bt)}\psi_2(s,\bt,z),\\
		&\psi_1^{*[1]}(s,\bt,z)=(-1)^{s+1}z^{k-1}\frac{\langle s+k+1,-s-1|e^{H(\bt)}\psi^{(1)*}(z)\beta^{*}g|k\rangle}{\langle s+k,-s-1|e^{H(\bt)}\beta^{*}g|k\rangle}=\frac{r(s,\bt+[z^{-1}]_1) }{r(s,\bt) }\psi_1^{*}(s,\bt,z),\\
		&\psi_2^{*[1]}(s,\bt,z)=(-1)^{s}z\frac{\langle s+k+1,-s-1|e^{H(\bt)}\psi^{(2)*}(z^{-1})\beta^{*}g|k\rangle }{\langle s+k,-s-1|e^{H(\bt)}\beta^{*}g|k\rangle }=-\frac{r(s-1,\bt+[z]_2) }{r(s,\bt) }\psi_2^{*}(s,\bt,z).
		\end{aligned}
	\end{equation}
\end{itemize}
In order to write $\psi_i^{[1]}$ and $\psi_i^{*[1]}$ in the form of \eqref{keystep},  we need the following proposition.
\begin{proposition}\label{proresqr}
	Given $q$, $r$ defined in \eqref{defqr} and $\psi_i$, $\psi_i^{*}$ defined in \eqref{defwav},
	 \begin{align}
		q(s^{\prime},\bt^{\prime})=&\oint_{C_R}\frac{{\rm d}z}{2\pi i}
		z^{-2}q(s^{\prime\prime},\bt^{\prime\prime}+[z^{-1}]_1)\psi_1^{*}(s^{\prime\prime}-1,\bt^{\prime\prime},z)\psi_1(s^{\prime},\bt^{\prime},z)\nonumber\\
		&+\oint_{C_r}\frac{{\rm d}z}{2\pi i}z^{-1}q(s^{\prime\prime}-1,\bt^{\prime\prime}+[z]_2)\psi_2^{*}(s^{\prime\prime}-1,\bt^{\prime\prime},z)\psi_2(s^{\prime},\bt^{\prime},z),\label{phires}\\	
		r(s^{\prime},\bt^{\prime})=&\oint_{C_R}\frac{{\rm d}z}{2\pi i}z^{-2}r(s^{\prime\prime}-1,\bt^{\prime\prime}-[z^{-1}]_1)\psi_1(s^{\prime\prime},\bt^{\prime\prime},z)\psi_1^{*}(s^{\prime},\bt^{\prime},z)\nonumber\\
		&+\oint_{C_r}\frac{{\rm d}z}{2\pi i}z^{-1}r(s^{\prime\prime},\bt^{\prime\prime}-[z]_2)\psi_2(s^{\prime\prime},\bt^{\prime\prime},z)\psi_2^{*}(s^{\prime},\bt^{\prime},z).\label{psires}
		\end{align}
Further $q$ is Toda eigenfunction, while $r$ is adjoint Toda  eigenfunction, that is 
\begin{equation}\label{parqr}
	\begin{aligned}
		&\partial_{{t_p^{(1)}}}q=\left((L_1^{p})_{\geq0}\right)(q),\quad \ \partial_{{t_p^{(2)}}}q=\left((L_2^{p})_{<0}\right)(q),\\
		&\partial_{{t_p^{(1)}}}r=-\left((L_1^{p})_{\geq0}\right)^{*}(r),\ \partial_{{t_p^{(2)}}}r=-\left((L_2^{p})_{<0}\right)^{*}(r).
	\end{aligned}
\end{equation}
\end{proposition}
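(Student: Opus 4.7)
The plan is to derive the residue identities \eqref{phires}--\eqref{psires} from a refined fermionic bilinear identity, and then deduce the eigenfunction equations \eqref{parqr} by differentiating under the residue. The fermionic input is the pair of operator identities
\begin{equation*}
S(1\otimes\beta)+(1\otimes\beta)S=\beta\otimes 1,\qquad S(\beta^{*}\otimes 1)+(\beta^{*}\otimes 1)S=1\otimes\beta^{*},
\end{equation*}
which follow by writing $\beta=\sum_i c_i\psi_i$, $\beta^{*}=\sum_j d_j\psi_j^{*}$ and using $\{\psi_i,\psi_j^{*}\}=\delta_{ij}$ term by term in $S=\sum_j\psi_j\otimes\psi_j^{*}$. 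Applied to $\tau\otimes\tau$ and combined with $S(\tau\otimes\tau)=0$ from \eqref{ferKP}, these yield the modified bilinear identities
\begin{equation*}
S(\tau\otimes\beta\tau)=\beta\tau\otimes\tau,\qquad S(\beta^{*}\tau\otimes\tau)=\tau\otimes\beta^{*}\tau,
\end{equation*}
which are precisely the bilinear identities for the Darboux pairs $(\tau,\beta\tau)$ and $(\beta^{*}\tau,\tau)$.

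To prove \eqref{phires} I would pair the first identity with the charge-matched bra $\langle s^{\prime}+k+1,-s^{\prime}|e^{H(\bt^{\prime})}\otimes\langle s^{\prime\prime}+k,-s^{\prime\prime}|e^{H(\bt^{\prime\prime})}$, which by \eqref{defqr} produces $q(s^{\prime},\bt^{\prime})$ on the right-hand side up to tau-function factors and a global sign. On the left I use the representation $S={\rm Res}_z\,z^{-1}\sum_{a=1,2}\psi^{(a)}(z)\otimes\psi^{(a)*}(z)$ from \eqref{respz}; each of the four VEVs is converted via \eqref{defwav} on the $\tau$-side, and via \eqref{traadjwav1} together with \eqref{defqr} on the $\beta\tau$-side, into a product of $\psi_i(s^{\prime},\bt^{\prime},z)$ with $\psi_i^{*}(s^{\prime\prime}-1,\bt^{\prime\prime},z)$ and either $q(s^{\prime\prime},\bt^{\prime\prime}+[z^{-1}]_1)$ or $q(s^{\prime\prime}-1,\bt^{\prime\prime}+[z]_2)$. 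The common tau-function factors and signs $(-1)^{s^{\prime}(s^{\prime}-1)/2+s^{\prime\prime}(s^{\prime\prime}-1)/2+s^{\prime}}$ match and cancel. For the $a=2$ contribution one needs the substitution $w=z^{-1}$, which exchanges $C_R\leftrightarrow C_r$ and produces the $w^{-1}$ weight in the second integral of \eqref{phires}. Identity \eqref{psires} follows in the same way from $S(\beta^{*}\tau\otimes\tau)=\tau\otimes\beta^{*}\tau$, paired with $\langle s^{\prime\prime}+k,-s^{\prime\prime}|e^{H(\bt^{\prime\prime})}\otimes\langle s^{\prime}+k,-s^{\prime}-1|e^{H(\bt^{\prime})}$ and using \eqref{traadjwav2}.

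For \eqref{parqr}, I would fix $(s^{\prime\prime},\bt^{\prime\prime})$ on the right-hand side of \eqref{phires}, so that the $(s^{\prime},\bt^{\prime})$-dependence enters only through $\psi_1(s^{\prime},\bt^{\prime},z)$ and $\psi_2(s^{\prime},\bt^{\prime},z)$. Applying $\partial_{t_p^{(1)}}$ and using \eqref{parpi}, the difference operator $(L_1^p)_{\geq 0}$ acts purely in $s^{\prime}$ and commutes with the $z$-integrals, so it factors outside both integrals and the result reassembles, by \eqref{phires} at the shifted values $s^{\prime}\mapsto s^{\prime}+i$, into $(L_1^p)_{\geq 0}(q(s^{\prime},\bt^{\prime}))$. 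The $\partial_{t_p^{(2)}}$ case is identical with $(L_2^p)_{<0}$, and the adjoint equations for $r$ follow by the same argument applied to \eqref{psires} together with the adjoint formulas for $\psi_i^{*}$ in \eqref{parpi}. The main obstacle is the sign and contour bookkeeping in the bosonization step: matching the $(-1)^{s(s-1)/2}$ factors between each VEV and its tau function, and the $z\mapsto z^{-1}$ substitution for the $a=2$ contribution that interchanges $C_R$ and $C_r$ while delivering the precise $z^{-2}$ and $z^{-1}$ weights appearing in \eqref{phires}--\eqref{psires}.
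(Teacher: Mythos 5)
Your proposal is correct and follows essentially the same route as the paper: deriving $S(\tau\otimes\beta\tau)=\beta\tau\otimes\tau$ and $S(\beta^{*}\tau\otimes\tau)=\tau\otimes\beta^{*}\tau$ from the (anti)commutation identities for $S$ with $1\otimes\beta$ and $\beta^{*}\otimes 1$, pairing with the same two bras, and obtaining \eqref{parqr} from \eqref{parpi}. Your additional bookkeeping of signs, the $z\mapsto z^{-1}$ substitution for the $a=2$ component, and the differentiation under the residue is consistent with what the paper leaves implicit.
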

\begin{proof}
For $\tau=g|k\rangle\in\mathcal{F}_k$ satisfying fermionic KP hierarchy $S(\tau\otimes\tau)=0$, 
	\begin{equation*}
		S(\tau\otimes\beta\tau)=\beta\tau\otimes\tau,\
S(\beta^{*}\tau\otimes\tau)=\tau\otimes\beta^{*}\tau,\quad  \beta\in\oplus_{j\in\mathbb{Z}}\mathbb{C}\psi_{j},\ \beta^{*}\in\oplus_{j\in\mathbb{Z}}\mathbb{C}\psi_{j}^{*},
	\end{equation*}
where we have used the relations $S(1\otimes\beta)=\beta\otimes1-(1\otimes\beta)S$ and
$S(\beta^{*}\otimes1)=1\otimes\beta^{*}-(\beta^{*}\otimes1)S$
(see  \cite{Yang2022}). If acting  $\langle s^{\prime}+k+1,-s^{\prime}|e^{H(\bm{t}^{\prime})}\otimes\langle s^{\prime\prime}+k,-s^{\prime\prime}|e^{H(\bt^{\prime\prime})}$ on the first equation and  $\langle s^{\prime\prime}+k,-s^{\prime\prime}|e^{H(\bm{t}^{\prime\prime})}\otimes\langle s^{\prime}+k,-s^{\prime}-1|e^{H(\bt^{\prime})}$ on the second one, we can obtain \eqref{phires} and \eqref{psires} respectively. Finally \eqref{parqr} can be obtained by \eqref{parpi}.
\end{proof}
\begin{remark}
	Conversely, for Toda eigenfunction $q$ and adjoint eigenfunction $r$, we also have \eqref{phires} and \eqref{psires}, which is just the spectral representation of Toda eigenfunction and adjoint eigenfunction \cite{Liu2024}. 
\end{remark}
\begin{remark}\label{remToda}
	If denote $\tau_1(s,\bt)=(-1)^{s}q(s,\bt)\tau(s,\bt)$ and $\tau_{-1}(s,\bt)=(-1)^{s}r(s-1,\bt)\tau(s,\bt)$, then by \eqref{defwav}, \eqref{phires} and \eqref{psires}, we can find $\big(\tau(s,\bt),\tau_{1}(s,\bt)\big)$ and $\big(\tau_{-1}(s,\bt),\tau(s,\bt)\big)$ satisfy bilinear equation of modified Toda hierarchy (or 2--component 1st modified KP hierarchy), that is
	\begin{equation}\label{mTodahierarchy}
		\begin{aligned}
			&\oint_{C_R}\frac{{\rm d}z}{2\pi i}
			z^{s-s^{\prime}-1}\tau_{0,s}(\bt-[z^{-1}]_1)\tau_{1,s^{\prime}}(\bt^{\prime}+[z^{-1}]_1)e^{\xi(t^{(1)}-{t^{(1)}}^{\prime},z)}\\
			=&-\oint_{C_r}\frac{{\rm d}z}{2\pi i}
			z^{s-s^{\prime}}\tau_{0,s+1}(\bt-[z]_2)\tau_{1,s^{\prime}-1}(\bt^{\prime}+[z]_2)e^{\xi({t^{(2)}}-{t^{(2)}}^{\prime},z^{-1})}+\tau_{1,s}(\bt)\tau_{0,s^{\prime}}(\bt^{\prime}),
		\end{aligned}
	\end{equation}
	where $\big(\tau_{0,s}(\bt), \tau_{1,s}(\bt)\big)$ is corresponding modified Toda tau pair.
\end{remark}
\begin{lemma}[\cite{Liu2024}]
	\label{lemTodacor}
	Given $\big(\tau_{0,s}(\bt),\tau_{1,s}(\bt)\big)$ satisfying modified Toda hierarchy \eqref{mTodahierarchy},
	\begin{align*}
		&\tau_{1,s+1}(\bt+[z^{-1}]_1)\tau_{0,s}(\bt)-z\cdot \tau_{1,s}(\bt+[z^{-1}]_1)\tau_{0,s+1}(\bt)=-z\cdot \tau_{0,s+1}(\bt+[z^{-1}]_1)\tau_{1,s}(\bt),\\
		&\tau_{1,s}(\bt+[z]_2)\tau_{0,s}(\bt)-z\cdot \tau_{1,s-1}(\bt+[z]_2)\tau_{0,s+1}(\bt)=\tau_{0,s}(\bt+[z]_2)\tau_{1,s}(\bt).
	\end{align*}
\end{lemma}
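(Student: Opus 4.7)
The plan is to derive both identities as specializations of the modified Toda bilinear equation \eqref{mTodahierarchy}, choosing values of $s^{\prime}$ and $\bt^{\prime}$ that collapse the two contour integrals into explicit residues. Throughout I will rename the integration variable $z$ in \eqref{mTodahierarchy} to $w$ so as not to collide with the free variable $z$ appearing in the lemma.

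For the first identity, the plan is to substitute $s^{\prime}=s+1$ and $\bt^{\prime}=\bt+[z^{-1}]_1$ in \eqref{mTodahierarchy}. Then $e^{\xi(t^{(1)}-{t^{(1)}}^{\prime},w)}=1-w/z$ while $e^{\xi(t^{(2)}-{t^{(2)}}^{\prime},w^{-1})}=1$, so the $C_R$ integrand becomes $w^{-2}(1-w/z)G(w)$ with $G(w)=\tau_{0,s}(\bt-[w^{-1}]_1)\tau_{1,s+1}(\bt+[z^{-1}]_1+[w^{-1}]_1)$. Since $G$ is a formal power series in $w^{-1}$ with $G(\infty)=\tau_{0,s}(\bt)\tau_{1,s+1}(\bt+[z^{-1}]_1)$, extracting the residue yields $-z^{-1}\tau_{0,s}(\bt)\tau_{1,s+1}(\bt+[z^{-1}]_1)$. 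Similarly the $C_r$ integrand $w^{-1}\tau_{0,s+1}(\bt-[w]_2)\tau_{1,s}(\bt+[z^{-1}]_1+[w]_2)$ is $w^{-1}$ times a formal power series in $w$, so its residue at $0$ equals $\tau_{0,s+1}(\bt)\tau_{1,s}(\bt+[z^{-1}]_1)$. Combining these contributions with the boundary term $\tau_{1,s}(\bt)\tau_{0,s+1}(\bt+[z^{-1}]_1)$ and multiplying through by $-z$ produces the first identity.

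The second identity is handled in an entirely analogous fashion by setting $s^{\prime}=s$ and $\bt^{\prime}=\bt+[z]_2$, so that $e^{\xi(t^{(1)}-{t^{(1)}}^{\prime},w)}=1$ and $e^{\xi(t^{(2)}-{t^{(2)}}^{\prime},w^{-1})}=1-z/w$. The $C_R$ integral reduces to $\tau_{0,s}(\bt)\tau_{1,s}(\bt+[z]_2)$, while the $C_r$ integrand is the product of $1-z/w$ with a formal power series in $w$, so its residue at $w=0$ picks out the $z/w$ part and contributes $z\tau_{0,s+1}(\bt)\tau_{1,s-1}(\bt+[z]_2)$. Assembling these together with the boundary term $\tau_{1,s}(\bt)\tau_{0,s}(\bt+[z]_2)$ gives the claimed identity.

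The main point to be careful with is the formal nature of the residues: one must consistently interpret $e^{\xi([\alpha],w)}$ and $e^{\xi([\alpha],w^{-1})}$ as Laurent polynomials in $w$, and use that $\tau(\bt\pm[w^{-1}]_1)$ is a formal power series in $w^{-1}$ while $\tau(\bt\pm[w]_2)$ is a formal power series in $w$, so that $\oint_{C_R}\frac{dw}{2\pi i}$ and $\oint_{C_r}\frac{dw}{2\pi i}$ really extract the $w^{-1}$ coefficient of the appropriate expansion. Once this bookkeeping is fixed, both identities fall out immediately from the two specializations described above.
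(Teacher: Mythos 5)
Your proof is correct: both specializations ($s'=s+1$, $\bt'=\bt+[z^{-1}]_1$ and $s'=s$, $\bt'=\bt+[z]_2$) collapse the contour integrals of \eqref{mTodahierarchy} into the stated residues, and the resulting equations rearrange exactly into the two identities, with the signs (including the overall minus in front of the $C_r$ integral) working out as you trace them. The paper itself gives no proof — the lemma is quoted from \cite{Liu2024} — and your derivation is the standard difference-Fay-type argument that is evidently the intended one, so there is nothing further to compare.
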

Therefore according to Remark \ref{remToda} and Lemma \ref{lemTodacor}, we have the following Lemma.

\begin{lemma}Denote $\Delta=\Lambda-1$, $\iota_{\Lambda^{+}}\Delta^{-1}=-\sum_{i\geq0}\Lambda^{i}$ and $\iota_{\Lambda^{-}}\Delta^{-1}=\sum_{i\geq1}\Lambda^{-i}$,
	\begin{align}
		&\Delta\Big(q(s,\bt)^{-1}\cdot\psi_1(s,\bt,\lambda)\Big)=\lambda\frac{q(s,\bt-[\lambda^{-1}]_1) }{q(s+1,\bt)q(s,\bt) }\psi_1(s,\bt,\lambda),\label{detqip1}\\
		&\Delta\Big(q(s,\bt)^{-1}\cdot\psi_2(s,\bt,\lambda) \Big)=-\frac{q(s+1,\bt-[\lambda]_2) }{q(s,\bt)q(s+1,\bt) }\psi_2(s,\bt,\lambda),\label{detqip2}\\
		&\Delta\Big(r(s,\bt)^{-1}\cdot\psi_1^{*}(s,\bt,\lambda) \Big)=-\lambda\frac{r(s+1,\bt+[\lambda^{-1}]_1) }{r(s+1,\bt)r(s,\bt)}\psi_1^{*}(s+1,\bt,\lambda),\label{detrip1}\\
		&\Delta\Big(r(s,\bt)^{-1}\cdot\psi_2^{*}(s,\bt,\lambda)\Big)=\frac{r(s,\bt+[\lambda]_2) }{r(s+1,\bt)r(s,\bt) }\psi_2^{*}(s+1,\bt,\lambda),\label{detrip2}\\
		&\iota_{\Lambda^{+}}\Delta^{-1} \left(q(s,\bt)\cdot\psi_1^{*}(s,\bt,\lambda)\right)=-\lambda^{-1}q(s,\bt+[\lambda^{-1}]_1)\psi_1^{*}(s-1,\bt,\lambda),\label{deltqp1} \\
		&\iota_{\Lambda^{-}}\Delta^{-1}\left(q(s,\bt)\cdot\psi_2^{*}(s,\bt,\lambda)\right)=q(s-1,\bt+[\lambda]_2)\psi_2^{*}(s-1,\bt,\lambda),\label{deltqp2}\\
		&\iota_{\Lambda^{-}}\Delta^{-1}\Big(\psi_1(s,\bt,\lambda)\cdot r(s,\bt)\Big) 
		=\lambda^{-1}\psi_1(s,\bt,\lambda)r(s-1,\bt-[\lambda^{-1}]_1),\label{deltqp3}\\
		&\iota_{\Lambda^{+}}\Delta^{-1}\Big( \psi_2(s,\bt,\lambda)\cdot r(s,\bt)\Big)=-\psi_2(s,\bt,\lambda)r(s,\bt-[\lambda]_2),\label{deltqp4}
	\end{align}
	where $\psi_i$ and $\psi_i^{*}$ are defined by \eqref{defwav}, while $q$ and $r$ are defined by \eqref{defqr}.
\end{lemma}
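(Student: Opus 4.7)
The plan is to reduce every identity in the lemma to a tau-function bilinear identity, which then follows from Remark \ref{remToda} together with Lemma \ref{lemTodacor} applied to an appropriate modified Toda tau pair. By Remark \ref{remToda}, the pair $(\tau_{0,s},\tau_{1,s})=(\tau(s,\bt),(-1)^{s}q(s,\bt)\tau(s,\bt))$ satisfies the modified Toda bilinear \eqref{mTodahierarchy}, and analogously $((-1)^{s}r(s-1,\bt)\tau(s,\bt),\tau(s,\bt))$ does. Identities \eqref{detqip1}--\eqref{detqip2} will correspond to the first pair, while \eqref{detrip1}--\eqref{detrip2} correspond to the second.

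For \eqref{detqip1}--\eqref{detrip2} I would substitute the tau-function expressions of $\psi_{i}$ and $\psi_{i}^{*}$ from \eqref{defwav} and of $q,r$ from \eqref{defqr}, so that applying $\Delta=\Lambda-1$ to the left-hand side produces a difference of two products of tau-function ratios. After clearing the common factors $\lambda^{s}e^{\xi(t^{(a)},\lambda^{\pm 1})}$ and the denominator $q(s,\bt)q(s+1,\bt)\tau(s,\bt)\tau(s+1,\bt)$ (or its $r$-analogue), each equation becomes a three-term bilinear relation in $\tau$ and $\tau_{\pm 1}$ at arguments $\bt$ and $\bt\mp[\lambda^{\pm 1}]_{1,2}$; this is exactly the content of Lemma \ref{lemTodacor}, with the first displayed identity giving the $t^{(1)}$-shift equations \eqref{detqip1} and \eqref{detrip1} and the second giving the $t^{(2)}$-shift equations \eqref{detqip2} and \eqref{detrip2}. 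The sign in the $\psi_{2}$, $\psi_{2}^{*}$ cases comes precisely from the sign structure of the second identity in Lemma \ref{lemTodacor} combined with the $(-1)^{s}$ factor in $\tau_{\pm 1}$.

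For the four identities \eqref{deltqp1}--\eqref{deltqp4} involving $\iota_{\Lambda^{+}}\Delta^{-1}$ or $\iota_{\Lambda^{-}}\Delta^{-1}$, I would use that as formal operators $\Delta\circ\iota_{\Lambda^{\pm}}\Delta^{-1}=\mathrm{id}$ on formal Laurent series in $\Lambda$ of the correct decay direction, so it suffices to apply $\Delta$ to each claimed right-hand side and verify that the integrand on the left is recovered. For example in \eqref{deltqp1}, applying $\Delta$ to $-\lambda^{-1}q(s,\bt+[\lambda^{-1}]_{1})\psi_{1}^{*}(s-1,\bt,\lambda)$ produces a difference which, after using \eqref{defwav} and \eqref{defqr}, reduces to the same bilinear relation that already yielded \eqref{detrip1}; and similarly \eqref{deltqp2}--\eqref{deltqp4} reduce to \eqref{detrip2}, \eqref{detqip1}, \eqref{detqip2} respectively. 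A check on the growth direction in $s$ of the right-hand side guarantees compatibility with the chosen expansion $\iota_{\Lambda^{+}}$ or $\iota_{\Lambda^{-}}$.

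The main obstacle throughout is bookkeeping: the $(-1)^{s}$ factor entering $\tau_{\pm 1}$, the built-in shift $s\to s-1$ in $\tau_{-1}=(-1)^{s}r(s-1,\bt)\tau(s,\bt)$, the correct choice of $[\lambda^{-1}]_{1}$ versus $[\lambda]_{2}$ coming from whether we are in the $t^{(1)}$- or $t^{(2)}$-direction, and the orientation of the formal inverse $\iota_{\Lambda^{\pm}}\Delta^{-1}$. Once one representative identity in each of the four categories is matched correctly against \eqref{mTodahierarchy} and Lemma \ref{lemTodacor}, the remaining six follow by the same manipulation with symmetric substitutions.
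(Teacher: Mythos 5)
Your strategy is exactly the paper's: the lemma is stated there with no justification beyond the phrase ``according to Remark \ref{remToda} and Lemma \ref{lemTodacor}'', and your reduction of each identity to the bilinear relations of Lemma \ref{lemTodacor} for the modified Toda pairs $(\tau,\tau_1)$ and $(\tau_{-1},\tau)$, combined with applying $\Delta$ to invert the $\iota_{\Lambda^{\pm}}\Delta^{-1}$ relations (plus the expansion-direction check that fixes the $\ker\Delta$ ambiguity), is the intended argument. One small bookkeeping slip: \eqref{deltqp1}--\eqref{deltqp2} involve $q$ and hence the pair $(\tau,\tau_1)$, so they reduce to the same bilinear identities as \eqref{detqip1}--\eqref{detqip2} (evaluated at shifted arguments), not as \eqref{detrip1}--\eqref{detrip2}, and symmetrically \eqref{deltqp3}--\eqref{deltqp4} pair with \eqref{detrip1}--\eqref{detrip2}; this misattribution does not affect the validity of the approach.
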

By substituting \eqref{detqip1}--\eqref{deltqp4} into \eqref{traadjwav1}--\eqref{traadjwav2}, we get the following proposition.
\begin{proposition}\label{traphipsi}
	Given $\psi_i$, $\psi_i^{*}$ defined by \eqref{defwav}, and $q$, $r$, $\psi_i^{[1]}$ and $\psi_i^{*[1]}$  defined by \eqref{defqr}--\eqref{traadjwav2},
	 \begin{itemize}
	 	\item $\tau\rightarrow\tau^{[1]}=\beta\tau$,
	 	\begin{align*}
	 		&\psi_1^{[1]}(s,\bt,z)=z^{-1}\Big(q(s+1,\bt)\cdot\Delta\cdot q(s,t)^{-1}\Big)\Big(\psi_1(s,\bt,z)\Big),\\
	 		&\psi_2^{[1]}(s,\bt,z)=\Big(q(s+1,\bt)\cdot\Delta\cdot q(s,\bt)^{-1}\Big)\Big(\psi_2(s,\bt,z) \Big),\\
	 			&\psi_1^{*[1]}(s,\bt, z)=-z \Big(q(s+1,\bt)^{-1}\cdot\iota_{\Lambda^{+}}\Delta^{-1}\cdot q(s+1,\bt)\Big) \Big(\psi_1^{*}(s+1,\bt,z)\Big), \\
	 		&\psi_2^{*[1]}(s,\bt, z)=-\Big(q(s+1,\bt)^{-1}\cdot\iota_{\Lambda^{-}}\Delta^{-1}\cdot q(s+1,\bt)\Big)\Big(\psi_2^{*}(s+1,\bt,z)\Big),
	 	\end{align*}
	 \end{itemize}
	 \begin{itemize}
	 	\item $\tau\rightarrow\tau^{[1]}=\beta^{*}\tau$,
	 		\begin{align*}
	 		&\psi_1^{[1]}(s,\bt,z)=z\Big(r(s-1,\bt)^{-1}\cdot\iota_{\Lambda^{-}}\Delta^{-1}\cdot r(s,\bt)\Big)\Big(\psi_1(s,\bt,z)\Big), \\
	 		&\psi_2^{[1]}(s,\bt,z)=\Big(r(s-1,\bt)^{-1}\cdot\iota_{\Lambda^{+}}\Delta^{-1}\cdot r(s,\bt)\Big)\Big(\psi_2(s,\bt,z)\Big),\\
	 		&\psi_1^{*[1]}(s,\bt,z)=-z^{-1}\Big(r(s-1,\bt)\cdot\Delta\cdot r(s-1,\bt)^{-1}\Big)\Big(\psi_1^{*}(s-1,\bt,z) \Big),\\
	 		&\psi_2^{*[1]}(s,\bt,z)=-\Big(r(s-1,\bt)\cdot\Delta\cdot r(s-1,\bt)^{-1}\Big)\Big(\psi_2^{*}(s-1,\bt,z)\Big).
	 	\end{align*}
	 \end{itemize}
\end{proposition}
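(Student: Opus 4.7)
The proof is a direct verification: each of the eight claimed identities matches one of the lines of \eqref{traadjwav1}--\eqref{traadjwav2} once the corresponding identity in the preceding lemma is substituted. The plan is therefore to handle the eight cases in parallel, with a single representative computation written out and the rest left as ``similarly''.

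Concretely, for the case $\tau^{[1]}=\beta\tau$ I would first focus on $\psi_1^{[1]}$. Starting from the operator side, I apply $\Delta$ to $q(s,\bt)^{-1}\psi_1(s,\bt,z)$ and invoke \eqref{detqip1} to rewrite the result as $z\,q(s,\bt-[z^{-1}]_1)\psi_1(s,\bt,z)/(q(s+1,\bt)q(s,\bt))$. Multiplying on the left by $z^{-1}q(s+1,\bt)$ cancels $z$ and the $q(s+1,\bt)$ in the denominator, leaving $q(s,\bt-[z^{-1}]_1)q(s,\bt)^{-1}\psi_1(s,\bt,z)$, which is exactly the right‐hand side of the first line of \eqref{traadjwav1}. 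The computation for $\psi_2^{[1]}$ is identical but uses \eqref{detqip2} and does not require the extra spectral factor, while the adjoint cases use \eqref{deltqp1}--\eqref{deltqp2}: in the $\psi_1^{*[1]}$ case one applies $q(s+1,\bt)^{-1}\cdot\iota_{\Lambda^+}\Delta^{-1}\cdot q(s+1,\bt)$ to $\psi_1^*(s+1,\bt,z)$ and, after shifting $s\mapsto s+1$ in \eqref{deltqp1}, obtains $-z^{-1}q(s+1,\bt+[z^{-1}]_1)q(s+1,\bt)^{-1}\psi_1^*(s,\bt,z)$, which is the claimed formula once the factor of $-z$ out front is combined. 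The $\psi_2^{*[1]}$ case proceeds the same way from \eqref{deltqp2}.

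The case $\tau^{[1]}=\beta^*\tau$ is entirely symmetric, with the roles of $q$ and $r$ exchanged and the ``differential'' and ``integral'' sides swapped. For $\psi_1^{*[1]}$ and $\psi_2^{*[1]}$ I substitute \eqref{detrip1}--\eqref{detrip2} into the third and fourth lines of \eqref{traadjwav2}; for $\psi_1^{[1]}$ and $\psi_2^{[1]}$ I use \eqref{deltqp3}--\eqref{deltqp4}, noting that in these two identities the function $r(s,\bt)$ is on the \emph{right} of $\iota\Delta^{-1}$, so the conjugating operator is $r(s-1,\bt)^{-1}\cdot\iota\Delta^{-1}\cdot r(s,\bt)$ rather than the form used in the $\beta\tau$ case. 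The bookkeeping of the spectral factors $z^{\pm 1}$ and the overall minus signs mirrors that of the first case.

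Since the eight verifications are mechanical, the only real substance is keeping the argument shifts consistent: the expressions in \eqref{traadjwav1}--\eqref{traadjwav2} are naturally written in terms of $\psi_i(s,\bt,z)$ or $\psi_i^*(s,\bt,z)$, while the right‐hand sides of \eqref{deltqp1}--\eqref{deltqp4} involve $\psi_i^*(s-1,\bt,z)$ (so one must shift $s\mapsto s+1$) and \eqref{detrip1}--\eqref{detrip2} involve $\psi_i^*(s+1,\bt,z)$ (requiring no shift once the operator is written at the point $s$). This lattice bookkeeping---rather than any nontrivial analysis---is the only place an error is likely, and it can be handled uniformly by writing out the $\psi_1^{[1]}$ computation in full and then tabulating the shifts and signs for the remaining seven cases.
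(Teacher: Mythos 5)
Your proposal is correct and is exactly the paper's argument: the paper proves this proposition by the single remark that one substitutes \eqref{detqip1}--\eqref{deltqp4} into \eqref{traadjwav1}--\eqref{traadjwav2}, which is precisely the case-by-case verification you outline, and your representative computations (including the shift $s\mapsto s+1$ in \eqref{deltqp1}--\eqref{deltqp2}) check out. The only quibble is your parenthetical that \eqref{detrip1}--\eqref{detrip2} require ``no shift'': since the claimed formulas act on $\psi_i^{*}(s-1,\bt,z)$ with conjugation by $r(s-1,\bt)$, one applies those two identities with $s\mapsto s-1$, a trivial relabeling that does not affect the argument.
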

By now we have obtained the relations in \eqref{keystep}. Further using \eqref{defpW}, we can find Toda Darboux transformation operator $T$ stated in the following theorem.
\begin{theorem}
Given Toda eigenfunction $q$ and Toda adjoint  eigenfunction $r$ (see \eqref{parqr}),
\begin{equation*}
	T_{+}(q)=\Lambda(q)\cdot\Delta\cdot q^{-1},\
	T_{-}(r)=\Lambda^{-1}(r)^{-1}\cdot\Delta^{-1}\cdot r
\end{equation*} 
 are Toda Darboux transformation operators, which acts on Toda Lax operators $L_1$ and $L_2$ by the way below
	\begin{itemize}
		\item for $T_{+}(q)$,
		\begin{equation*}%\label{LTP}
			L_1^{[1]}=T_{+}(q)\cdot L_1\cdot\iota_{\Lambda^{-}}T_{+}(q)^{-1},\ L_2^{[1]}=T_{+}(q)\cdot L_2\cdot\iota_{\Lambda^{+}}T_{+}(q)^{-1},
		\end{equation*}
	\end{itemize}
	\begin{itemize}
		\item for $T_{-}(r)$,
		\begin{equation*}%\label{LTM}
			L_1^{[1]}=\iota_{\Lambda^{-}}T_{-}(r)\cdot L_1\cdot T_{-}(r)^{-1},\ L_2^{[1]}=\iota_{\Lambda^{+}}T_{-}(r)\cdot L_2\cdot T_{-}(r)^{-1},
		\end{equation*}
	\end{itemize}
	where $\iota_{\Lambda^{\pm}}T_{+}(q)^{-1}=q\cdot\iota_{\Lambda^{\pm}}\Delta^{-1}\cdot \Lambda(q)^{-1}$ and $\iota_{\Lambda^{\pm}}T_{-}(r)=\Lambda^{-1}(r)^{-1}\cdot\iota_{\Lambda^{\pm}}\Delta^{-1}\cdot r$.
\end{theorem}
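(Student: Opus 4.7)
The plan is to translate the wave–function identities in Proposition~\ref{traphipsi} into operator identities on the wave operators $W_i,\widetilde{W}_i$, then on the dressing operators $S_i,\widetilde{S}_i$, and finally on the Lax operators $L_i=S_i\Lambda^{\epsilon_i}S_i^{-1}$ (with $\epsilon_1=1$, $\epsilon_2=-1$), following exactly the scheme outlined around \eqref{keystep}. The only genuinely new ingredient is the bookkeeping of the formal inverses $T_{\pm}^{-1}$, which is where the regularizations $\iota_{\Lambda^{\pm}}\Delta^{-1}$ enter.

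First I would read off, from each of the four lines of Proposition~\ref{traphipsi}, the pair $(A,l_i)$ or $(B,m_i)$ of the key step \eqref{keystep}. For $\tau\to\beta\tau$ the scalar factors $z^{-1}$ and $z$ are absorbed into $\Lambda^{-1}$ and $\Lambda$ respectively (using $\Lambda^{k}(z^{\pm s})=z^{\pm(s+k)}$), yielding $A=T_{+}(q)$ with $l_1=-1,l_2=0$, and correspondingly $B$ built from $q^{-1}\iota_{\Lambda^{\pm}}\Delta^{-1}q$ with appropriate $m_i$. The analogous reading for $\tau\to\beta^{*}\tau$ gives $A=\iota_{\Lambda^{-}}T_{-}(r)$ (and $\iota_{\Lambda^{+}}T_{-}(r)$ for $\psi_2$), while the adjoint identities produce $B$ from the plain difference operator $r\cdot\Delta\cdot r^{-1}$.

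Second, the defining relation \eqref{defpW} turns these into operator identities $W_i^{[1]}=A\,W_i\,\Lambda^{l_i}$ and $\widetilde{W}_i^{[1]}=B\,\widetilde{W}_i\,\Lambda^{-m_i}$. The exponential dressings $e^{\xi(t^{(a)},\Lambda^{\pm1})}$ commute with any power of $\Lambda$, so $S_i^{[1]}=A\,S_i\,\Lambda^{l_i}$. Substituting into $L_i^{[1]}=S_i^{[1]}\Lambda^{\epsilon_i}(S_i^{[1]})^{-1}$ and using $[\Lambda^{l_i},\Lambda^{\epsilon_i}]=0$ collapses the shifts and gives the clean conjugation $L_i^{[1]}=A\,L_i\,A^{-1}$.

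Third, the formal inverse $A^{-1}$ must be regularized so that $L_i^{[1]}$ lies in the same subalgebra as $L_i$. For $T_{+}(q)=\Lambda(q)\cdot\Delta\cdot q^{-1}$, the operator itself is a first-order monic difference operator (equal to $\Lambda-\Lambda(q)/q$), but its inverse is formal; since $L_1=\Lambda+\sum_{i\ge 1}u_i\Lambda^{1-i}$ contains only $\Lambda^{j}$ with $j\le 1$, the right factor must expand with negative powers only, i.e.\ $\iota_{\Lambda^{-}}T_{+}(q)^{-1}$, and dually $\iota_{\Lambda^{+}}T_{+}(q)^{-1}$ for $L_2$. For $T_{-}(r)=\Lambda^{-1}(r)^{-1}\cdot\Delta^{-1}\cdot r$ the situation is mirror-symmetric: the inverse $r^{-1}\cdot\Delta\cdot\Lambda^{-1}(r)$ is already a genuine first-order operator, and the regularization is applied to $T_{-}(r)$ itself on the left, again $\iota_{\Lambda^{-}}$ for $L_1$ and $\iota_{\Lambda^{+}}$ for $L_2$.

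The main obstacle is this last bookkeeping: one must check that with the indicated choice of $\iota_{\Lambda^{\pm}}\Delta^{-1}$ the composition $A L_i A^{-1}$ truly produces a Laurent series in $\Lambda$ of the allowed form, and that the adjoint side is consistent with $\widetilde{S}_i=(S_i^{-1})^{*}$. Both consistency checks reduce to the fact that $q,r$ satisfy the (adjoint) Toda equations \eqref{parqr}, which is precisely the content of Proposition~\ref{proresqr}; preservation of the Lax equations \eqref{laxequ} by $L_i^{[1]}=A L_i A^{-1}$ then follows automatically from the equivalence of the Lax, dressing, and wave-function formulations reviewed in Section~\ref{sec:2}.
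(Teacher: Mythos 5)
Your proposal is correct and follows essentially the same route as the paper: the theorem is obtained there exactly by reading off the pairs $(A,l_i)$ and $(B,m_i)$ of \eqref{keystep} from Proposition~\ref{traphipsi}, passing via \eqref{defpW} to $W_i^{[1]}=AW_i\Lambda^{l_i}$ and $\widetilde{W}_i^{[1]}=B\widetilde{W}_i\Lambda^{-m_i}$, and then conjugating $L_i=S_i\Lambda^{\pm1}S_i^{-1}$, with the $\iota_{\Lambda^{\pm}}$ regularizations fixed by the required form of the transformed dressing/Lax operators. Your closing remark that preservation of the Lax equations follows from the equivalence of formulations (since $\tau^{[1]}$ again solves the fermionic KP hierarchy) is also the paper's implicit justification.
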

\begin{corollary}\label{cortratsqrt}
	Given Toda Darboux transformation operators $T_{+}(q)$ and $T_{-}(r)$, their actions on corresponding wave operators $S_i$, wave function $\psi_i$, adjoint wave function $\psi_i^{*}$ and tau function $\tau(s,t)$ can be shown in the following table.
	\begin{center}
		\begin{tabular}{lll}
			\multicolumn{3}{c}{Table I. Darboux transformations: Toda $\rightarrow$ Toda}\\
			\hline \hline
			&$T_1=T_{+}(q)$ &\quad$T_2=T_{-}(r)$  \\
			\hline
			{\rm Wave operator} &$S_1^{[1]}=T_1\cdot S_1\cdot\Lambda^{-1}$& \quad$S_1^{[1]}=\iota_{\Lambda^{-}}T_2\cdot S_1\cdot\Lambda$\\
			&$S_2^{[1]}=T_1\cdot S_2$& \quad$S_2^{[1]}=\iota_{\Lambda^{+}}T_2\cdot S_2$\\
			{\rm Wave function}&$\psi_1^{[1]}=z^{-1}\cdot T_1(\psi_1)$& \quad$ \psi_1^{[1]}=z\cdot\iota_{\Lambda^{-}}T_2(\psi_1) $ \\
			&$\psi_2^{[1]}=T_1(\psi_2)$& \quad $\psi_2^{[1]}=\iota_{\Lambda^{+}}T_2(\psi_2)$ \\
			{\rm Adjoint wave function}&$\psi_1^{*[1]}=z\cdot\iota_{\Lambda^{+}}(T_1^{-1})^{*}(\psi_1^{*})$&\quad $\psi_1^{*[1]}=z^{-1}\cdot (T_2^{-1})^{*}(\psi_1^{*})$ \\
			& $\psi_2^{*[1]}=\iota_{\Lambda^{-}}(T_1^{-1})^{*}(\psi_2^{*})$ & \quad
			$\psi_2^{*[1]}=(T_2^{-1})^{*}(\psi_2^{*})$ \\
			{\rm Eigenfunction}  &  $\tilde{q}^{[1]}=T_1(\tilde{q})$    &\quad $\tilde{q}^{[1]}=T_2(\tilde{q})$ 
			\\
			{\rm Adjoint eigenfunction} &  $\tilde{r}^{[1]}=(T_1^{-1})^{*}(\tilde{r})$  &\quad $\tilde{r}^{[1]}=(T_2^{-1})^{*}(\tilde{r})$
			\\
			{\rm Tau function} &$\tau^{[1]}(s)=(-1)^{s}q(s)\tau(s)$& \quad$\tau^{[1]}(s)=(-1)^{s}r(s-1)\tau(s)$\\
			\hline
		\end{tabular}
	\end{center}
\end{corollary}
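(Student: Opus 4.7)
The plan is to assemble each row of Table~I from material already present in Section~\ref{sec:3}, so that essentially no new calculation is required beyond unpacking definitions. First I would dispose of the tau function row, which is immediate: the identities $\tau^{[1]}(s,\bt)=(-1)^{s}q(s,\bt)\tau(s,\bt)$ and $\tau^{[1]}(s,\bt)=(-1)^{s}r(s-1,\bt)\tau(s,\bt)$ were already derived at the start of Section~\ref{sec:3} from the 2--component boson--fermion correspondence applied to $\tau^{[1]}=\beta\tau$ and $\tau^{[1]}=\beta^{*}\tau$, combined with the definitions \eqref{defqr} of $q$ and $r$.

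Next I would handle the (adjoint) wave function rows by rewriting the eight formulas of Proposition~\ref{traphipsi} in terms of $T_1=T_{+}(q)$ and $T_2=T_{-}(r)$. For instance, $\psi_1^{[1]}=z^{-1}\bigl(q(s+1,\bt)\cdot\Delta\cdot q(s,\bt)^{-1}\bigr)(\psi_1)$ is literally $z^{-1}T_1(\psi_1)$ by definition of $T_1$, and the remaining entries follow by the same substitution. In the adjoint entries $(T_i^{-1})^{*}$ carries a factor $(\Delta^{-1})^{*}$, which is well--defined only once a branch $\iota_{\Lambda^{+}}\Delta^{-1}$ or $\iota_{\Lambda^{-}}\Delta^{-1}$ is chosen; the correct branch in each line is the one already forced on us by the $\iota_{\Lambda^{\pm}}\Delta^{-1}$ that appears in Proposition~\ref{traphipsi} via the identities \eqref{deltqp1}--\eqref{deltqp4}.

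For the wave operator row, I would strip the ``test function'' $z^{\pm s}$ from the wave function relations using the defining identities $\psi_i(s,\bt,z)=W_i(z^s)$ and $\psi_i^{*}(s,\bt,z)=\widetilde{W}_i(z^{-s})$ from \eqref{defpW}. Since multiplication by $z$ acts on $z^s$ as the shift $\Lambda$, the wave function identity $\psi_1^{[1]}=z^{-1}T_1(\psi_1)$ becomes $W_1^{[1]}(z^s)=T_1 W_1\Lambda^{-1}(z^s)$, which forces $W_1^{[1]}=T_1 W_1\Lambda^{-1}$; because $\Lambda^{-1}$ commutes with $e^{\xi(\tone,\Lambda)}$, this descends to $S_1^{[1]}=T_1 S_1\Lambda^{-1}$. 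The remaining wave operator entries are read off in the same way from the corresponding wave function identities. Finally, the eigenfunction and adjoint eigenfunction rows are not about the specific $q,r$ that define $T_{\pm}$ but about an arbitrary (adjoint) eigenfunction $\tilde q,\tilde r$; they follow from the intertwinings of Lax operators proved in the theorem, since $T_i$ then preserves the linear system \eqref{parqr} that defines (adjoint) eigenfunctions, so that $T_i(\tilde q)$ and $(T_i^{-1})^{*}(\tilde r)$ automatically solve the transformed system with $L_j$ replaced by $L_j^{[1]}$.

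The main obstacle, and really the only place where care is needed, is the consistent bookkeeping of the $\iota_{\Lambda^{\pm}}\Delta^{-1}$ projections in the adjoint wave function and wave operator rows. Because $\Delta^{-1}$ is not a unique formal difference operator, every occurrence of $T_i^{-1}$ or $(T_i^{-1})^{*}$ in the table must be decorated with the branch that keeps the result inside the appropriate completed shift--operator algebra; the table reflects the fact that this choice is dictated, line by line, by which component ($\psi_1^{*}$ versus $\psi_2^{*}$, equivalently the non--positive versus non--negative power part of the dressing algebra) is being transformed.
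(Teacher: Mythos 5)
Your proposal is correct and, for five of the seven rows, follows essentially the same route the paper intends: the tau-function row is the identity already derived at the start of Section~\ref{sec:3}, the (adjoint) wave-function rows are Proposition~\ref{traphipsi} rewritten in terms of $T_{\pm}$, and the wave-operator rows are obtained by stripping $z^{\pm s}$ through \eqref{defpW}, with the $\iota_{\Lambda^{\pm}}\Delta^{-1}$ branches dictated by \eqref{deltqp1}--\eqref{deltqp4}. The one place you genuinely diverge is the eigenfunction and adjoint-eigenfunction rows: the paper (Remark~\ref{retraqr}) gets $\tilde q^{[1]}=T(\tilde q)$ and $\tilde r^{[1]}=(T^{-1})^{*}(\tilde r)$ by substituting the already-established transformations of $\psi_i$ and $\psi_i^{*}$ into the spectral representations \eqref{phires}--\eqref{psires}, whereas you argue that $T$ preserves the linear system \eqref{parqr}. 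Your route is legitimate, but be careful with the justification: the Lax intertwining $L_j^{[1]}=TL_jT^{-1}$ alone does not imply that $T$ maps eigenfunctions to eigenfunctions; one also needs $\partial_{t_p^{(a)}}T=\big((L_a^{[1]})^{p}\big)_{\bullet}T-T\big(L_a^{p}\big)_{\bullet}$ (with the appropriate projections), which is exactly what the wave-operator row $S_i^{[1]}=T S_i\Lambda^{\mp1}$ encodes, so you should cite that row rather than the conjugation formula. The paper's spectral-representation argument buys you this for free at the cost of invoking Proposition~\ref{proresqr}; your argument is more standard Darboux-theoretic but needs the extra evolution identity made explicit.
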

\begin{remark}\label{retraqr}
		For the changes of eigenfunction $\tilde{q}$, adjoint eigenfunction $\tilde{r}$, they can be obtained according to spectral representations \eqref{phires}--\eqref{psires} and changes of $\psi_i$ and $\psi_i^{*}$. 
\end{remark}
Note that in \cite{Song2023}, $T_{\pm}$ commutes with each other:
\begin{align*}
	T_{+}(q^{[1]})T_{-}(r)=T_-(r^{[1]})T_{+}(q),\quad
	T_{+}(q_2^{[1]})T_{+}(q_1)=T_{+}(q_1^{[1]})T_{+}(q_2),\quad
	T_{-}(r_2^{[1]})T_{-}(r_1)=T_{-}(r_1^{[1]})T_{-}(r_2).
\end{align*} 
Therefore when discussing the multi--step Toda Darboux transformations, we can only need to consider the following case:
\begin{equation*}
	\begin{aligned}
		&(L_1,L_2)
		\xrightarrow{T_{+}(q_1)} (L_1^{[1]},L_2^{[1]})
		\xrightarrow{T_{+}(q_{2}^{[1]})}(L_1^{[2]},L_2^{[2]})\longrightarrow\cdots\longrightarrow (L_1^{[n-1]},L_2^{[n-1]})\xrightarrow{T_{+}(q_{n}^{[n-1]})}(L_1^{[n]},L_2^{[n]})\\
		&\xrightarrow{T_{-}(r_{1}^{[n]})}(L_1^{[n+1]},L_2^{[n+1]})\xrightarrow{T_{-}(r_{2}^{[n+1]})}(L_1^{[n+2]},L_2^{[n+2]})\longrightarrow\cdots\longrightarrow 
		(L_1^{[n+m-1]},L_2^{[n+m-1]})\xrightarrow{T_{-}(r_{m}^{[n+m-1]})}(L_1^{[n+m]},L_2^{[n+m]}),
	\end{aligned}
\end{equation*}
where $q_i$ $(1\leq i\leq n)$ and $r_j$ $(1\leq j\leq m)$
are respectively $n$ independent Toda eigenfunctions and $m$ Toda adjoint eigenfunctions, and $A^{[n]}$ means the transformed object $A$ under $n$--step elementary Toda Darboux transformations.
If denote 
\begin{equation*}
	T^{[n,m]}=T_{-}(r_m^{[n+m-1]})\cdots T_{-}(r_1^{[n]})\cdots T_{+}(q_n^{[n-1]})\cdots T_{+}(q_1),
\end{equation*}
then $(n+m)$--step Darboux transformed Lax operators $L_1^{[n+m]}$ and $L_2^{[n+m]}$ can be expressed by
\begin{align*}
		L_1^{[n+m]}=\iota_{\Lambda^{-}}T^{[n,m]}\cdot L_1\cdot\iota_{\Lambda^{-}}(T^{[n,m]})^{-1},\
		L_2^{[n+m]}=\iota_{\Lambda^{+}}T^{[n,m]}\cdot L_2\cdot\iota_{\Lambda^{+}}(T^{[n,m]})^{-1}.
\end{align*}
Here $T^{[n,m]}$ is called $(n+m)$--step Toda Darboux transformation operator. By similar method in \cite{Liu2010}, one can obtain the determinant representation of $T^{[n,m]}$, which is given in Appendix. 

Further given Toda eigenfunction $q$, adjoint eigenfunction $r$ and tau function $\tau^{[0]}$, we have the corresponding transformed ones under $T^{[n,m]}$
given in the way below,
\begin{equation}\label{deftraqrtau}
	\begin{aligned}
		&q^{[n+m]}=T^{[n,m]}(q),\ r^{[n+m]}=\left((T^{[n,m]})^{-1}\right)^*(r),\\		&\tau^{[n+m]}=(-1)^{(n+m)s}\Lambda^{-1}(r_m^{[n+m-1]})\cdots\Lambda^{-1}(r_1^{[n]})q_n^{[n-1]}\cdots q_2^{[1]}q_1\tau^{[0]},
	\end{aligned}
\end{equation}
By determinant representation of $T^{[n,m]}$, one can further obtain the following results.
 \begin{itemize}
	\item $n\geq m$,
	\begin{align}
		&q^{[n+m]}=\frac{IW_{m,n+1}(r_m,\dots,r_1;q_1,\dots,q_n,q)}{IW_{m,n}(r_m,\dots,r_1;q_1,\dots,q_n)},\nonumber\\
		&r^{[n+m]}=(-1)^{n}\frac{\Lambda(IW_{m+1,n}(r,r_m,\dots,r_1;q_1,\dots,q_n))}{\Lambda(IW_{m,n}(r_m,\dots,r_1;q_1,\dots,q_n))},\nonumber\\
		&\tau^{[n+m]}=(-1)^{nm+(m+n)s}IW_{m,n}(r_m,\dots,r_1;q_1,\dots,q_n)\tau^{[0]},\label{taunmIwmn}
	\end{align}
\end{itemize}
\begin{itemize}
	\item $n<m$,
	\begin{align}
		&q^{[n+m]}=(-1)^{m}\frac{\Lambda^{-1}\left(IW^{*}_{n+1,m}(q,q_n,\dots,q_1;r_1,\dots,r_m)
			\right) }{\Lambda^{-1}\left(IW^{*}_{n,m}(q_n,\dots,q_1;r_1,\dots,r_m) \right) },\nonumber\\
		&r^{[n+m]}=\frac{IW_{n,m+1}^{*}(q_n,\dots,q_1;r_1,\dots,r_m,r)}{IW_{n,m}^{*}(q_n,\dots,q_1;r_1,\dots,r_m)},\nonumber\\
		&\tau^{[n+m]}=(-1)^{(n+m)s}\Lambda^{-1}\left(IW_{n,m}^{*}(q_n,\dots,q_1;r_1,\dots,r_m)\right)\tau^{[0]},\label{taunmIW*nm}
	\end{align}
\end{itemize}
where $IW_{i,j}$ and $IW^{*}_{i,j}$ are generalized discrete Wronskian determinants (see Appendix). 
\section{Determinant formulas of vacuum expectation values} \label{sec:4}
In this section, we will compute VEV $\langle s+k+n-m,-s|e^{H(\bt)}\beta_m^{*}\cdots\beta_1^{*}\beta_n\cdots\beta_1g|k\rangle $ by considering it as the transformed tau function $\tau^{[n+m]}$ obtained from $\tau^{[0]}=\langle s+k,-s|e^{H(\bt)}g|k\rangle$ under $T^{[n,m]}$, since we have obtained the determinant formulas for $\tau^{[n+m]}$ in the previous section.

Firstly in order to view above VEV as the transformed tau function under $T^{[n,m]}$, we need the following lemma.
\begin{lemma}\label{covacqr}
	For $\beta_i\in\oplus_{j\in\mathbb{Z}}\mathbb{C}\psi_j$, $\beta_i^{*}\in\oplus_{j\in\mathbb{Z}}\mathbb{C}\psi_j^{*}$ ($i=1,2$)
	and $g\in GL_{\infty}$, if $q_i$ and $r_i$ are defined by
	\begin{align}
		q_i(s,\bt)=(-1)^{s}\frac{\langle s+k+1, -s|e^{H(\bt)}\beta_i g|k\rangle}{\langle s+k, -s|e^{H(\bt)}g|k\rangle}, \
		r_i(s,\bt)=(-1)^{s+1}\frac{\langle s+k, -s-1|e^{H(\bt)}\beta_i^{*}g|k\rangle}{\langle s+k+1, -s-1|e^{H(\bt)}g|k\rangle},\label{defqiri}
	\end{align}
	then the following relations hold
	\begin{align}
		&\frac{\langle s+k+2, -s|e^{H(\bt)}\beta_2\beta_1 g|k\rangle}{\langle s+k, -s|e^{H(\bt)}g|k\rangle}=q_1\cdot \Delta(q_2)-\Delta(q_1)\cdot q_2
		,\label{vac21} \\
		&\frac{\langle s+k,-s|e^{H(\bt)}\beta_2^{*}\beta_1 g|k\rangle }{\langle s+k,-s|e^{H(\bt)}g|k\rangle}=-\Delta^{-1}(q_1r_2),\label{vac2*1}\\
		&\frac{\langle s+k,-s|e^{H(\bt)}\beta_2\beta_1^{*} g|k\rangle }{\langle s+k,-s|e^{H(\bt)}g|k\rangle}=\Delta^{-1}(r_1q_2),\label{vac21*}\\
		&\frac{\langle s+k-2, -s|e^{H(\bt)}\beta_2^{*}\beta_1^{*}g|k\rangle}{\langle s+k, -s|e^{H(\bt)}g|k\rangle}=-\Lambda^{-2}\Big(r_1\cdot\Delta(r_2)-\Delta(r_1)\cdot r_2 \Big).\label{vac22}
	\end{align}
\end{lemma}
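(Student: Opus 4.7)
The plan is to realize each vacuum expectation value on the left--hand side as the two--step Darboux transformed tau function $\tau^{[2]}$ built from the initial tau function $\tau^{[0]}(s,\bt)=(-1)^{s(s-1)/2}\langle s+k,-s|e^{H(\bt)}g|k\rangle$, and then apply the single--step tau and (adjoint) eigenfunction transformation rules from Corollary \ref{cortratsqrt} (Table I) iteratively. In each identity, the product $\beta_2^{(*)}\beta_1^{(*)}$ prescribes a unique chain of elementary transformations $T_+$ and $T_-$, with initial eigenfunctions $q_i$ or adjoint eigenfunctions $r_i$ as defined in \eqref{defqiri}.

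First I would handle the two pure cases. For \eqref{vac21}, the chain is $\tau^{[0]}\xrightarrow{T_+(q_1)}\tau^{[1]}\xrightarrow{T_+(q_2^{[1]})}\tau^{[2]}$; iterating $\tau^{[1]}(s)=(-1)^s q(s)\tau(s)$ gives $\tau^{[2]}/\tau^{[0]}=q_2^{[1]}(s)\,q_1(s)$, while the rule $\tilde q^{[1]}=T_+(q_1)(\tilde q)$ yields $q_2^{[1]}=\Lambda(q_1)\cdot\Delta(q_2/q_1)$. Expanding with $\Lambda=1+\Delta$ collapses this to $q_2^{[1]}q_1=q_1\Delta(q_2)-\Delta(q_1)q_2$. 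For \eqref{vac22}, the dual chain is $\tau^{[0]}\xrightarrow{T_-(r_1)}\tau^{[1]}\xrightarrow{T_-(r_2^{[1]})}\tau^{[2]}$, giving $\tau^{[2]}/\tau^{[0]}=r_2^{[1]}(s-1)r_1(s-1)$; computing $r_2^{[1]}=(T_-(r_1)^{-1})^{*}(r_2)$ via the adjoint identity $\Delta^{*}=-\Lambda^{-1}\Delta$ and shifting by $\Lambda^{-2}$ produces the claimed $-\Lambda^{-2}(r_1\Delta(r_2)-\Delta(r_1)r_2)$.

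The two mixed identities follow the same scheme with alternating chains. For \eqref{vac2*1}, I take $\tau^{[0]}\xrightarrow{T_+(q_1)}\tau^{[1]}\xrightarrow{T_-(r_2^{[1]})}\tau^{[2]}$, so $\tau^{[2]}/\tau^{[0]}=r_2^{[1]}(s-1)q_1(s)$; using $T_+(q_1)^{-1}=q_1\Delta^{-1}\Lambda(q_1)^{-1}$ and the adjoint identity $(\Delta^{-1})^{*}=-\Delta^{-1}\Lambda$ together with $\Lambda\cdot q_1=\Lambda(q_1)\Lambda$ gives $r_2^{[1]}=-\Lambda(q_1)^{-1}\Lambda\bigl(\Delta^{-1}(q_1 r_2)\bigr)$; shifting $s\mapsto s-1$ and multiplying by $q_1(s)$ collapses the expression to $-\Delta^{-1}(q_1 r_2)$. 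For \eqref{vac21*}, the chain $\tau^{[0]}\xrightarrow{T_-(r_1)}\tau^{[1]}\xrightarrow{T_+(q_2^{[1]})}\tau^{[2]}$ together with $q_2^{[1]}=T_-(r_1)(q_2)=\Lambda^{-1}(r_1)^{-1}\Delta^{-1}(r_1 q_2)$ immediately yields $q_2^{[1]}(s)r_1(s-1)=\Delta^{-1}(r_1 q_2)$.

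The main obstacle is the book--keeping in the mixed cases, specifically the formal adjoint of $\Delta^{-1}$ and its commutation past the multiplication operators by $q_1$ and $\Lambda(q_1)^{-1}$ (and analogously for $r_1$); one must also fix the chamber ($\iota_{\Lambda^+}\Delta^{-1}$ or $\iota_{\Lambda^-}\Delta^{-1}$) consistently so that the resulting expressions agree with the ones on the left. As an independent check and alternative route, one can apply the multi--step determinant formulas \eqref{taunmIwmn}--\eqref{taunmIW*nm} for $(n,m)\in\{(2,0),(1,1),(0,2)\}$ and verify that the generalized Wronskians $IW_{m,n}$ and $IW^{*}_{n,m}$ reduce to the stated right--hand sides in each of these small cases.
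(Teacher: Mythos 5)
Your overall architecture is the same as the paper's: realize each VEV as an iterated Darboux-transformed tau function, multiply the one-step tau ratios, and compute the transformed (adjoint) eigenfunctions with $T_{\pm}$; all four of your final algebraic reductions (including the signs coming from $\Delta^{*}=-\Lambda^{-1}\Delta$ and $(\Delta^{-1})^{*}=-\Delta^{-1}\Lambda$) check out against \eqref{vac21}--\eqref{vac22}.

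The one substantive step you leave unproved is exactly the step the paper's proof is devoted to: you assert, by citing the eigenfunction row of Table I, that the second-step data attached to $\beta_2$ (resp.\ $\beta_2^{*}$) \emph{is} $T_{+}(q_1)(q_2)$ (resp.\ $(T_{+}(q_1)^{-1})^{*}(r_2)$), where $q_2,r_2$ are the fermionic ratios \eqref{defqiri}. But Table I only tells you how an abstract eigenfunction of the original hierarchy is mapped to one of the transformed hierarchy; it does not by itself identify that image with the specific fermionic ratio
$A(s,\bt)=(-1)^{s}\langle s+k+2,-s|e^{H(\bt)}\beta_2\beta_1g|k\rangle/\langle s+k+1,-s|e^{H(\bt)}\beta_1g|k\rangle$,
which is what the left-hand sides of \eqref{vac21}--\eqref{vac22} actually involve. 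The paper closes this by writing $\beta_2={\rm Res}_z a^{(1)}(z)\psi^{(1)}(z)+{\rm Res}_z a^{(2)}(z)\psi^{(2)}(z)$ (and similarly for $\beta_2^{*}$), so that both $q_2$ and $A$ are contour integrals of the \emph{same} densities against $\psi_i$ and $\psi_i^{[1]}$ respectively; Proposition \ref{traphipsi} then converts $\psi_i^{[1]}=z^{-l_i}T_{+}(q_1)(\psi_i)$ inside the integral into $A=T_{+}(q_1)(q_2)$, and likewise $B=(T_{+}(q_1)^{-1})^{*}(r_2)$. Without this spectral-representation argument (or an equivalent one), your invocation of the transformation rules is circular, since the lemma itself is the base case used later to justify viewing the VEVs as transformed tau functions. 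Your closing suggestion to instead verify the small cases of \eqref{taunmIwmn}--\eqref{taunmIW*nm} has the same issue, as those formulas rest on the identification \eqref{deftraqrtau} that this lemma is meant to establish.
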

\begin{proof}
	We only prove \eqref{vac21} and \eqref{vac2*1}, since \eqref{vac21*} and \eqref{vac22} can be obtained in a similar way. Firstly, we denote
	\begin{align}\label{ABstexpression}
		A(s,\bt)=(-1)^{s}\frac{\langle s+k+2,-s|e^{H(\bt)}\beta_2\beta_1g|k\rangle}{\langle s+k+1,-s|e^{H(\bt)}\beta_1g|k\rangle},\ 
		B(s,\bt)=(-1)^{s+1}\frac{\langle s+k+1,-s-1|e^{H(\bt)}\beta_2^{*}\beta_1g|k\rangle }{\langle s+k+2,-s-1|e^{H(\bt)}\beta_1g|k\rangle}.
	\end{align}
 As we can know, there exist unique $a^{(i)}(z)$ and $ b^{(i)}(z) \in\mathbb{C}((z^{-1}))$ such that 
	 $\beta_2$ and $\beta_2^{*}$ can be written into
	\begin{align*}
		&\beta_2={\rm Res}_za^{(1)}(z)\psi^{(1)}(z)+{\rm Res}_za^{(2)}(z)\psi^{(2)}(z),\\
		&\beta^{*}_2={\rm Res}_zb^{(1)}(z)\psi^{(1)*}(z)+{\rm Res}_zb^{(2)}(z)\psi^{(2)*}(z),
	\end{align*} 
where $\psi^{(i)}(z)$ and $\psi^{(i)*}(z)$ are defined by \eqref{defpsi(1)and*}. Therefore $A(s,\bt)$ and $B(s,\bt)$ can be rewritten into 
\begin{align*}
	&A(s,\bt)=\oint_{C_R}\frac{{\rm d}z}{2\pi i}z^{k+1}a^{(1)}(z)\psi_1^{[1]}(s,\bt,z)+\oint_{C_r}\frac{{\rm d}z}{2\pi i}z^{-1}a^{(2)}(z^{-1})\psi_2^{[1]}(s,\bt,z),\\
	&B(s,\bt)=\oint_{C_R}\frac{{\rm d}z}{2\pi i}z^{-k-1}b^{(1)}(z)\psi_1^{*[1]}(s,\bt,z)-\oint_{C_r}\frac{{\rm d}z}{2\pi i}z^{-3}b^{(2)}(z^{-1})\psi_2^{*[1]}(s,\bt,z),
\end{align*}
where 
	\begin{equation*}
	\begin{aligned}
		&\psi_1^{[1]}(s,\bt,z)=(-1)^{s}z^{-k-1}\frac{\langle s+k+2,-s|e^{H(\bt)}\psi^{(1)}(z)\beta_1 g|k\rangle }{\langle s+k+1,-s|e^{H(\bt)}\beta_1 g|k\rangle },\\
		&\psi_2^{[1]}(s,\bt,z)=(-1)^{s}z^{-1}\frac{\langle s+k+2,-s|e^{H(\bt)}\psi^{(2)}(z^{-1})\beta_1 g|k\rangle
		}{\langle s+k+1,-s|e^{H(\bt)}\beta_1 g|k\rangle }
		\\
		&\psi_1^{*[1]}(s,\bt,z)=(-1)^{s+1}z^{k+1}\frac{\langle s+k+1, -s-1|e^{H(\bt)}\psi^{(1)*}(z)\beta_1 g|k\rangle}{\langle s+k+2, -s-1 |e^{H(\bt)}\beta_1 g|k\rangle} \\
		&\psi_2^{*[1]}(s,\bt,z)=(-1)^{s}z\frac{\langle s+k+1, -s-1|e^{H(\bt)}\psi^{(2)*}(z^{-1})\beta_1 g|k\rangle}{\langle s+k+2, -s-1 |e^{H(\bt)}\beta_1 g|k\rangle}.
	\end{aligned}
\end{equation*}
And $q_2$, $r_2$ can also be rewritten into
\begin{equation}\label{q2r2resab}
	 \begin{aligned}
		&q_2(s,\bt)=\oint_{C_R}\frac{{\rm d}z}{2\pi i}z^{k}a^{(1)}(z)\psi_1(s,\bt,z)+\oint_{C_r}\frac{{\rm d}z}{2\pi i}z^{-1}a^{(2)}(z^{-1})\psi_2(s,\bt,z),\\
		&r_2(s,\bt)=\oint_{C_R}\frac{{\rm d}z}{2\pi i}z^{-k}b^{(1)}(z)\psi_1^{*}(s,\bt,z)-\oint_{C_r}\frac{{\rm d}z}{2\pi i}z^{-3}b^{(2)}(z^{-1})\psi_2^{*}(s,\bt,z),
	\end{aligned}
\end{equation}
where $\psi_i$ and $\psi_i^{*}$ are defined by \eqref{defwav}. Then according to Proposition \ref{traphipsi}, we can find
\begin{align*}
	&\psi_1^{[1]}=z^{-1}\cdot T_{+}(q_1)(\psi_1),\qquad \quad \psi_2^{[1]}=T_{+}(q_1)(\psi_2),\\ &\psi_1^{[1]*}=z\cdot \iota_{\Lambda^{+}}(T_{+}(q_1)^{-1})^{*}(\psi_1^{*}),\ \ \psi_2^{[1]*}= \iota_{\Lambda^{-}}(T_{+}(q_1)^{-1})^{*}(\psi_2^{*}),
\end{align*}
So further by \eqref{q2r2resab}, one can obtain 
\begin{equation*}\label{ATqBTr}
	A(s,\bt)=T_{+}(q_1)(q_2),\ B(s,\bt)=(T_{+}(q_1)^{-1})^{*}(r_2),
\end{equation*}
which finish the corresponding proof.
\end{proof}
\begin{remark}
	$A(s,\bt)$ and $B(s,\bt)$ in \eqref{ABstexpression} are respectively Toda eigenfunction and adjoint eigenfunction corresponding to $\tau^{[1]}(s,\bt)=(-1)^{s}q_1(s,\bt)\tau(s,\bt)$.
\end{remark}
After above preparation, now we can give determinant formulas for VEV \eqref{Todavev}, which is just the theorem below.
\begin{theorem}\label{theorem:vev}
	Given $\beta_i\in\oplus_{l\in\mathbb{Z}}\psi_l$, $\beta^{*}_j\in\oplus_{l\in\mathbb{Z}}\psi_l^{*}$, if  define $q_i$ and $r_j$ in \eqref{defqiri},
	then 
	\begin{itemize}
		\item for $n\geq m$,
	\end{itemize}
	\begin{align}\label{vacimp2}
		\frac{\langle s+k+n-m,-s|e^{H(\bt)}\beta_m^{*}\cdots\beta_1^{*}\beta_n\cdots\beta_1g|k\rangle}{\langle s+k,-s|e^{H(\bt)}g|k\rangle }=(-1)^{nm+(m+n)s}
		IW_{m,n}(r_m,\dots,r_1;q_1,\dots,q_n),
	\end{align}
	\begin{itemize}
		\item for $n<m$,
	\end{itemize}
	\begin{align}\label{vacimp2*}
		\frac{\langle s+k+n-m,-s|e^{H(\bt)}\beta_m^{*}\cdots\beta_1^{*}\beta_n\cdots\beta_1g|k\rangle}{\langle s+k,-s|e^{H(\bt)}g|k\rangle}=(-1)^{(n+m)s}\Lambda^{-1}\left(IW_{n,m}^{*}(q_n,\dots,q_1;r_1,\dots,r_m) \right).
	\end{align}

\end{theorem}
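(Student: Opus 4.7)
The plan is to recognize the VEV on the left--hand sides of \eqref{vacimp2}--\eqref{vacimp2*} as the $(n+m)$--step Darboux--transformed tau function $\tau^{[n+m]}$ produced from $\tau^{[0]}(s,\bt) = (-1)^{s(s-1)/2}\langle s+k,-s|e^{H(\bt)}g|k\rangle$ under $T^{[n,m]}$, and then to substitute into the determinant formulas \eqref{taunmIwmn}--\eqref{taunmIW*nm}. Concretely, I introduce the intermediate Toda tau functions
\begin{align*}
\tau^{[i]}(s,\bt) &= (-1)^{s(s-1)/2}\langle s+k+i,-s|e^{H(\bt)}\beta_i\cdots\beta_1 g|k\rangle, \quad 0\le i\le n,\\
\tau^{[n+j]}(s,\bt) &= (-1)^{s(s-1)/2}\langle s+k+n-j,-s|e^{H(\bt)}\beta_j^*\cdots\beta_1^*\beta_n\cdots\beta_1 g|k\rangle,\quad 0\le j\le m,
\end{align*}
each of which is a bona fide Toda tau function since the underlying fermionic vector satisfies the fermionic KP bilinear equation by iterated elementary fermionic Darboux transformations.

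The core is an induction proving $\tau^{[i]} = (-1)^s q_i^{[i-1]}\tau^{[i-1]}$ for $1\le i\le n$ and $\tau^{[n+j]} = (-1)^s r_j^{[n+j-1]}(s-1,\bt)\tau^{[n+j-1]}$ for $1\le j\le m$, where $q_i^{[i-1]}$ and $r_j^{[n+j-1]}$ are the iterated Darboux transforms of \eqref{deftraqrtau}. The base cases $i=1$, $j=1$ follow directly from \eqref{defqiri}. For the inductive step I expand $\beta_i$ (or $\beta_j^*$) as a residue over the fermion fields $\psi^{(a)}(z)$ exactly as in the proof of Lemma \ref{covacqr}: the integrand pairs the coefficients of $\beta_i$ (or $\beta_j^*$) against the transformed wave functions $\psi_a^{[i-1]}$ (or $\psi_a^{*\,[n+j-1]}$) attached to $\tau^{[i-1]}$ (or $\tau^{[n+j-1]}$). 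Applying Proposition \ref{traphipsi} recursively then rewrites these transformed wave functions in terms of the starting--level ones composed with the product of elementary Darboux operators $T_+(q_{i-1}^{[i-2]})\cdots T_+(q_1)$ (resp.\ $T_-(r_{j-1}^{[n+j-2]})\cdots T_-(r_1^{[n]})T_+(q_n^{[n-1]})\cdots T_+(q_1)$), and the residues collapse to the iterated Darboux images $q_i^{[i-1]}$ and $r_j^{[n+j-1]}$.

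Iterating yields $\tau^{[n+m]}(s,\bt) = (-1)^{(n+m)s}\Lambda^{-1}(r_m^{[n+m-1]})\cdots\Lambda^{-1}(r_1^{[n]})q_n^{[n-1]}\cdots q_1\,\tau^{[0]}(s,\bt)$, matching \eqref{deftraqrtau} exactly. Substituting \eqref{taunmIwmn} when $n\ge m$ or \eqref{taunmIW*nm} when $n<m$, and dividing both sides by $\tau^{[0]}$, produces \eqref{vacimp2} and \eqref{vacimp2*} respectively. The main obstacle is the inductive step at depth $\ge 2$: Lemma \ref{covacqr} runs its argument when the base state is $g|k\rangle\in GL_\infty|k\rangle$, while the intermediate base states $\beta_{i-1}\cdots\beta_1 g|k\rangle$ and $\beta_{j-1}^*\cdots\beta_1^*\beta_n\cdots\beta_1 g|k\rangle$ are generally not of that form. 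The remedy is that the wave--function and Darboux apparatus of Sections \ref{sec:2}--\ref{sec:3} depends only on the tau function, not on the particular $GL_\infty$ representative, so the same residue/expansion argument remains valid; what requires care is the bookkeeping of the shifting charge sectors and accompanying sign factors through successive applications.
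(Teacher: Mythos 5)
Your proposal is correct and follows the paper's own proof essentially step for step: the paper likewise telescopes the VEV as a product of successive ratios $A_{i,j}/A_{i-1,j}$ with $A_{i,j}=\langle s+k+j-i,-s|e^{H(\bt)}\beta_i^{*}\cdots\beta_1^{*}\beta_j\cdots\beta_1g|k\rangle$, identifies each ratio via Lemma \ref{covacqr} with $(-1)^{s}q_j^{[j-1]}$ or $(-1)^{s}\Lambda^{-1}\big(r_i^{[n+i-1]}\big)$, matches the resulting product with \eqref{deftraqrtau}, and then substitutes \eqref{taunmIwmn} and \eqref{taunmIW*nm}. The only point where your justification is looser than the paper's is the obstacle you flag at the end: the paper resolves it by noting that each intermediate state $\beta_i^{*}\cdots\beta_1^{*}\beta_j\cdots\beta_1g|k\rangle$ again satisfies the fermionic KP hierarchy and hence equals $g_{ij}|l_{ij}\rangle$ for some $g_{ij}\in GL_\infty$ (citing \cite{Liu2024,Miwa2000}), so Lemma \ref{covacqr} applies verbatim at every level --- this is the precise form of your ``depends only on the tau function'' remark.
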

\begin{proof}
If denote $A_{i,j}(s,\bt)=\langle s+k+j-i,-s|e^{H(\bt)}\beta_i^{*}\cdots\beta_1^{*}\beta_j\cdots\beta_1g|k\rangle$ $(0\leq i\leq m;0\leq j\leq n)$, then $A_{m,n}$ is just the VEV \eqref{Todavev}. Note that we can write $A_{m,n}$ into the form below
	\begin{align*}
A_{m,n}=\frac{A_{m,n}}{A_{m-1,n}}\cdot\frac{A_{m-1,n}}{A_{m-2,n}}\cdots\frac{A_{1,n}}{A_{0,n}}\cdot\frac{A_{0,n}}{A_{0,n-1}}\cdots\frac{A_{0,1}}{A_{0,0}}\cdot A_{0,0}.
	\end{align*}
Since $\beta_i^{*}\cdots\beta_1^{*}\beta_j\cdots\beta_1g|k\rangle\in\mathcal{F}$
satisfies fermionic KP hierarchy \eqref{ferKP}, there exists one fixed integer $l_{ij}\in\mathbb{Z}$ and $g_{ij}\in GL_\infty$ such that $\beta_i^{*}\cdots\beta_1^{*}\beta_j\cdots\beta_1g|k\rangle
=g_{ij}|l_{ij}\rangle$ (please refer to \cite{Liu2024,Miwa2000}). Thus now we can obtain by Lemma \ref{covacqr} that
\begin{align*}
		&\frac{A_{0,j}(s,\bt) }{A_{0,j-1}(s,\bt) }=(-1)^{s}\cdot q_j^{[j-1]}(s,\bt),\ (1\leq j\leq n),  \\
		&\frac{A_{i,n}(s,\bt)}{A_{i-1,n}(s,\bt)}=(-1)^{s}\cdot\Lambda^{-1}\left( r_i^{[n+i-1]}(s,\bt)\right),\ (1\leq i\leq m),
	\end{align*}
which imply that 
\begin{equation*}
A_{m,n}=(-1)^{(n+m)s}\Lambda^{-1}(r_m^{[n+m-1]})\cdots\Lambda^{-1}(r_1^{[n]})q_n^{[n-1]}\cdots q_2^{[1]}\cdot q_1\cdot A_{0,0}.
\end{equation*}
	Note that by \eqref{deftraqrtau},  we can regard $(-1)^{\frac{s(s-1)}{2}}A_{m,n}$ as ($n+m$)--step Darboux transformed Toda tau function $\tau^{[n+m]}$ derived from $\tau^{[0]}=(-1)^{\frac{s(s-1)}{2}}A_{0,0}$. Then by using \eqref{taunmIwmn} and \eqref{taunmIW*nm}, we can obtain \eqref{vacimp2} and \eqref{vacimp2*} respectively. 
\end{proof}
\begin{remark}
If we disrupt the order of $\beta_i$ and $\beta_j^*$ in 
$\langle s+k+n-m,-s|e^{H(\bt)}\beta_m^{*}\cdots\beta_1^{*}
\beta_n\cdots\beta_1g|k\rangle$, then the corresponding results remain the same except the sign. If we swap the positions of two adjacent items in $\beta_m^{*}\cdots\beta_1^{*}
\beta_n\cdots\beta_1$, a minus sign will appear, which can help to determine the sign. The reason why we can do this is the commutativity of $T_\pm$. As for $\beta_i\beta_j^*=-\beta_j^*\beta_i+c_{ij}$, note that the constant $c_{ij}$ is absorbed in $\Delta^{-1}(q_ir_j)$. In practice, we always represent $\beta_i$ and $\beta_j^*$ using $\psi^{(a)}(\lambda_i)$ and $\psi^{*(b)}(\mu_j)$ respectively. Notice that $\psi^{(a)}(\lambda_i)$ anti--commute with $\psi^{*(b)}(\mu_j)$ when $\lambda_i\neq\mu_j$. Therefore we believe the results in Theorem \ref{theorem:vev} can be used widely. 
\end{remark}
Based on this remark, we can also get determinant formulas for $\langle s+k+n-m,-s|e^{H(\bt)}\beta_{n}\cdots\beta_1\beta_{m}^{*}\cdots\beta_{1}^{*}g|k\rangle$, which are given as follows.
	\begin{itemize}
		\item for $n\geq m$,
	\end{itemize}
	\begin{align*}
		\frac{\langle s+k+n-m,-s|e^{H(\bt)}\beta_{n}\cdots\beta_1\beta_{m}^{*}\cdots\beta_{1}^{*}g|k\rangle}{\langle s+k,-s|e^{H(\bt)}g|k\rangle}=(-1)^{(m+n)s}
		IW_{m,n}(r_m,\dots,r_1;q_1,\dots,q_n),
	\end{align*}
	\begin{itemize}
		\item for $n<m$,
	\end{itemize}
	\begin{align*}
		\frac{\langle s+k+n-m,-s|e^{H(\bt)}\beta_{n}\cdots\beta_1\beta_{m}^{*}\cdots\beta_{1}^{*}g|k\rangle}{\langle s+k,-s|e^{H(\bt)}g|k\rangle}=(-1)^{nm+(n+m)s}\Lambda^{-1}\left(IW_{n,m}^{*}(q_n,\dots,q_1;r_1,\dots,r_m) \right).
	\end{align*}
Particularly, by setting $m=0$ in \eqref{vacimp2} and $n=0$ in \eqref{vacimp2*}, we get the following corollary.
\begin{corollary}Given $\beta_i\in\oplus_{l\in\mathbb{Z}}\psi_l$, $\beta^{*}_j\in\oplus_{l\in\mathbb{Z}}\psi_l^{*}$, $q_i$ and $r_j$ defined in \eqref{defqiri},
	\begin{align*}
		&\frac{\langle s+k+n,-s|e^{H(\bt)}\beta_n\cdots\beta_1g|k\rangle}{\langle s+k,-s|e^{H(\bt)}g|k\rangle}=(-1)^{sn}det\left(\Delta^{i-1}(q_{j}(s))\right)_{1\leq i,j\leq n},\\
		&\frac{\langle s+k-m,-s|e^{H(\bt)}\beta^{*}_{m}\cdots\beta^{*}_{1}g |k\rangle}{\langle s+k,-s|e^{H(\bt)}g|k\rangle}=(-1)^{sm}det\left((\Delta^{*})^{i-1}(r_{j}(s-1))\right)_{1\leq i,j\leq m}.
	\end{align*}
\end{corollary}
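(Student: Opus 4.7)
The plan is to realize the VEV on the left-hand side of \eqref{vacimp2} and \eqref{vacimp2*} as the tau function obtained from $\tau^{[0]}(s,\bt)=(-1)^{s(s-1)/2}\langle s+k,-s|e^{H(\bt)}g|k\rangle$ under an $(n+m)$-step Toda Darboux transformation, and then read off the two determinantal formulas directly from \eqref{taunmIwmn} and \eqref{taunmIW*nm}. The bridge between the fermionic insertions $\beta_j$ (resp.\ $\beta_i^{*}$) and the elementary Darboux operators $T_+$ (resp.\ $T_-$) is supplied by Lemma \ref{covacqr}, together with the fact that any state of the form $\beta_i^{*}\cdots\beta_1^{*}\beta_j\cdots\beta_1 g|k\rangle$ is again $g_{ij}|l_{ij}\rangle$ for some $g_{ij}\in GL_\infty$, so that one can iterate the formula \eqref{defqiri} on each intermediate state.

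Concretely, I would introduce the shorthand $A_{i,j}(s,\bt):=\langle s+k+j-i,-s|e^{H(\bt)}\beta_i^{*}\cdots\beta_1^{*}\beta_j\cdots\beta_1 g|k\rangle$ and write the telescoping
\begin{equation*}
A_{m,n}=\frac{A_{m,n}}{A_{m-1,n}}\cdots\frac{A_{1,n}}{A_{0,n}}\cdot\frac{A_{0,n}}{A_{0,n-1}}\cdots\frac{A_{0,1}}{A_{0,0}}\cdot A_{0,0}.
\end{equation*}
At each factor I would apply Lemma \ref{covacqr} with the background state updated to the intermediate $g_{ij}|l_{ij}\rangle$: the $\beta_j$-step identifies the ratio $A_{0,j}/A_{0,j-1}$ with $(-1)^{s}q_j^{[j-1]}$, and the $\beta_i^{*}$-step identifies $A_{i,n}/A_{i-1,n}$ with $(-1)^{s}\Lambda^{-1}(r_i^{[n+i-1]})$, where the superscripts $[j-1]$ and $[n+i-1]$ record the number of Darboux transformations already performed.

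The main point of care is to ensure that this iteration matches the ordering fixed for the multi-step operator $T^{[n,m]}$, namely that all $T_+$-steps (from the $\beta_j$'s) come strictly before the $T_-$-steps (from the $\beta_i^{*}$'s). The telescoping above is arranged in exactly this order; the commutativity of $T_+$ and $T_-$ recalled just after Corollary \ref{cortratsqrt} justifies that one may always put the word $\beta_m^{*}\cdots\beta_1^{*}\beta_n\cdots\beta_1$ in this canonical form without altering the associated VEV. Once the ordering is matched, collecting the telescoping yields
\begin{equation*}
A_{m,n}=(-1)^{(n+m)s}\Lambda^{-1}(r_m^{[n+m-1]})\cdots\Lambda^{-1}(r_1^{[n]})\,q_n^{[n-1]}\cdots q_1\,A_{0,0},
\end{equation*}
which, after multiplying by $(-1)^{s(s-1)/2}$, is exactly the recipe \eqref{deftraqrtau} for $\tau^{[n+m]}$.

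Finally, substituting \eqref{taunmIwmn} when $n\geq m$ and \eqref{taunmIW*nm} when $n<m$, and dividing by $A_{0,0}$, yields \eqref{vacimp2} and \eqref{vacimp2*} respectively. The step I expect to require the most care is the sign bookkeeping: the $(-1)^{nm+(m+n)s}$ of the first case versus the $(-1)^{(n+m)s}$ of the second case is built from the $(-1)^{nm}$ difference between \eqref{taunmIwmn} and \eqref{taunmIW*nm} together with the $(-1)^{s}$ accumulated at each of the $n+m$ telescoping steps.
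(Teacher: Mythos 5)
Your argument is correct and follows the paper's route exactly: it reproduces the paper's proof of Theorem \ref{theorem:vev} (telescoping $A_{m,n}$, applying Lemma \ref{covacqr} at each step, identifying $(-1)^{s(s-1)/2}A_{m,n}$ with $\tau^{[n+m]}$ via \eqref{deftraqrtau}, then invoking \eqref{taunmIwmn} and \eqref{taunmIW*nm}), from which the paper obtains this corollary by setting $m=0$, resp.\ $n=0$. You should just add the final (trivial) specialization explicitly: for $m=0$ the determinant $IW_{0,n}$ reduces to $\det\left(\Delta^{i-1}(q_j(s))\right)$ and the sign $(-1)^{nm+(m+n)s}$ becomes $(-1)^{ns}$, while for $n=0$ the factor $\Lambda^{-1}\left(IW^{*}_{0,m}\right)$ becomes $\det\left((\Delta^{*})^{i-1}(r_j(s-1))\right)$ with sign $(-1)^{ms}$.
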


Finally, we give some results for VEV \eqref{kpvev} related with KP hierarchy. Using a similar method, we have the following lemma.
\begin{lemma}
Given $g\in GL_\infty$, $\alpha_i\in \oplus_{j\in\mathbb{Z}}\mathbb{C}\psi_j$, $\alpha_i^*\in \oplus_{j\in\mathbb{Z}}\mathbb{C}\psi_j^*$ $(i=0,1,2)$, we have the following relations hold
\begin{align*}
&\frac{\langle k| e^{\mathcal{H}(\bm{x})}\alpha_0\alpha_0^*g|k\rangle}
{\langle k| e^{\mathcal{H}(\bm{x})} g|k\rangle}=\int \textbf{q}_0(\bm{x})\textbf{r}_0(\bm{x})dx\triangleq\Omega(\mathbf{q}_0,\mathbf{r}_0),\\
&\frac{\langle k+2| e^{\mathcal{H}(\bm{x})}\alpha_2\alpha_1 g|k\rangle}
{\langle k| e^{\mathcal{H}(\bm{x})} g|k\rangle}=\textbf{q}_1\textbf{q}_{2,x_1}-\textbf{q}_2\textbf{q}_{1,x_1},\\
&\frac{\langle k-2| e^{\mathcal{H}(\bm{x})}\alpha_2^*\alpha_1^*g|k\rangle}
{\langle k| e^{\mathcal{H}(\bm{x})} g|k\rangle}=-\textbf{r}_1\textbf{r}_{2,x_1}+\textbf{r}_2\textbf{r}_{1,x_1},
\end{align*}
where $k\in\mathbb{Z}$, $\bm{x}=(x_1,x_2,\cdots,)$, $\mathcal{H}(\bm{x})=\sum_{l=1}^{\infty}
\sum_{j\in\mathbb{Z}}x_l :\psi_j\psi^{*}_{j+l}:$ and
\begin{align*}
\textbf{q}_i(\bm{x})=\frac{\langle k+1|e^{\mathcal{H}(\bm{x})}\alpha_ig|k\rangle}{\langle k|e^{\mathcal{H}(\bm{x})}g|k\rangle},\quad \textbf{r}_i(\bm{x})=\frac{\langle k-1|e^{\mathcal{H}(\bm{x})}\alpha^{*}_ig|k\rangle}{\langle k|e^{\mathcal{H}(\bm{x})}g|k\rangle}.
\end{align*}
\end{lemma}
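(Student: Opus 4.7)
The plan is to repeat the proof of Lemma \ref{covacqr} in the KP (single--component) setting, replacing the $2$--component boson--fermion correspondence $\sigma_{Q,\bt}$ with the standard one, the difference $\Delta=\Lambda-1$ with $\partial_{x_1}$, and the Toda Darboux operators $T_\pm$ with their KP counterparts $T_+^{KP}(\mathbf{q})=\mathbf{q}\,\partial_{x_1}\mathbf{q}^{-1}$ and $T_-^{KP}(\mathbf{r})=\mathbf{r}^{-1}\partial_{x_1}^{-1}\mathbf{r}$. The three ingredients I will need are: the expansions $\alpha_i=\mathrm{Res}_z\,a_i(z)\psi(z)$ and $\alpha_i^*=\mathrm{Res}_z\,b_i(z)\psi^*(z)$ for unique $a_i,b_i\in\mathbb{C}((z^{-1}))$; the single--component move--through relations
\begin{equation*}
\langle k+1|\psi(z)e^{\mathcal{H}(\bm{x})}=z^{k}\langle k|e^{\mathcal{H}(\bm{x}-[z^{-1}])},\qquad
\langle k-1|\psi^{*}(z)e^{\mathcal{H}(\bm{x})}=z^{-k}\langle k|e^{\mathcal{H}(\bm{x}+[z^{-1}])};
\end{equation*}
and the KP analog of Proposition \ref{traphipsi}, describing how a KP (adjoint) eigenfunction transforms under an elementary KP Darboux transformation.

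For the $\alpha_2\alpha_1$ relation I would factor
\begin{equation*}
\frac{\langle k+2|e^{\mathcal{H}(\bm{x})}\alpha_2\alpha_1g|k\rangle}{\langle k|e^{\mathcal{H}(\bm{x})}g|k\rangle}
=\frac{\langle k+2|e^{\mathcal{H}(\bm{x})}\alpha_2\alpha_1g|k\rangle}{\langle k+1|e^{\mathcal{H}(\bm{x})}\alpha_1g|k\rangle}\cdot\mathbf{q}_1(\bm{x}),
\end{equation*}
and express the first factor as a contour integral in $z$ of $a_2(z)$ times the KP wave function attached to the transformed tau function $\alpha_1g|k\rangle$, exactly as $A(s,\bt)$ and $B(s,\bt)$ are treated in the proof of Lemma \ref{covacqr}. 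The KP version of Proposition \ref{traphipsi} then collapses this factor to $T_+^{KP}(\mathbf{q}_1)(\mathbf{q}_2)=\mathbf{q}_{2,x_1}-(\mathbf{q}_{1,x_1}/\mathbf{q}_1)\mathbf{q}_2$, and multiplication by $\mathbf{q}_1$ yields the Wronskian $\mathbf{q}_1\mathbf{q}_{2,x_1}-\mathbf{q}_2\mathbf{q}_{1,x_1}$. The $\alpha_2^*\alpha_1^*$ identity is completely parallel, using the adjoint operator $(T_-^{KP}(\mathbf{r}_1)^{-1})^*$ applied to $\mathbf{r}_2$; the overall sign comes from the adjoint.

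For the mixed case $\alpha_0\alpha_0^*$ I would use the analogous factorization
\begin{equation*}
\frac{\langle k|e^{\mathcal{H}(\bm{x})}\alpha_0\alpha_0^*g|k\rangle}{\langle k|e^{\mathcal{H}(\bm{x})}g|k\rangle}
=\frac{\langle k|e^{\mathcal{H}(\bm{x})}\alpha_0\alpha_0^*g|k\rangle}{\langle k-1|e^{\mathcal{H}(\bm{x})}\alpha_0^*g|k\rangle}\cdot\mathbf{r}_0(\bm{x}),
\end{equation*}
identify the first factor as the KP eigenfunction associated to the transformed tau function $\alpha_0^*g|k\rangle$, and apply $T_-^{KP}(\mathbf{r}_0)(\mathbf{q}_0)=\mathbf{r}_0^{-1}\partial_{x_1}^{-1}(\mathbf{r}_0\mathbf{q}_0)$ to obtain $\Omega(\mathbf{q}_0,\mathbf{r}_0)$. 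I expect the main obstacle to be precisely this mixed case: the nonlocal operator $\partial_{x_1}^{-1}$ in $T_-^{KP}$ determines $T_-^{KP}(\mathbf{r}_0)(\mathbf{q}_0)$ only up to an additive function of $(x_2,x_3,\dots)$, so the identification with $\Omega(\mathbf{q}_0,\mathbf{r}_0)$ requires fixing the integration constant. I plan to pin this ambiguity down by differentiating the left--hand side in $x_1$ via $\partial_{x_1}\langle k|e^{\mathcal{H}(\bm{x})}A|k\rangle=\langle k|e^{\mathcal{H}(\bm{x})}H_1A|k\rangle$ with $H_1=\sum_j:\psi_j\psi_{j+1}^*:$, and using the commutators $[H_1,\alpha_0]$ and $[H_1,\alpha_0^*]$ to verify that this derivative equals $\mathbf{q}_0\mathbf{r}_0$, so that both sides are the same $x_1$--antiderivative.
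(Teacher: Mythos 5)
Your proposal is correct and is precisely the route the paper intends: the paper offers no explicit proof of this lemma, saying only that it follows ``using a similar method'' to Lemma \ref{covacqr}, and your plan carries out exactly that adaptation (factoring out $\mathbf{q}_1$ or $\mathbf{r}_1$, identifying the remaining ratio as a Darboux-transformed (adjoint) eigenfunction, and collapsing it via $T_{\pm}^{KP}$, with the sign in the $\alpha_2^*\alpha_1^*$ case coming from $\partial_{x_1}^*=-\partial_{x_1}$). Your attention to the integration constant hidden in $\partial_{x_1}^{-1}$ for the mixed case is a point the paper itself only acknowledges in passing (the remark that ``the constant $c_{ij}$ is absorbed in $\Delta^{-1}(q_ir_j)$''), and your proposed resolution by differentiating in $x_1$ is sound.
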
 
Then by this lemma, we can also view below VEV 
$$\langle n-m|e^{\mathcal{H}(\bm{x})}\alpha_m^*\cdots\alpha_1^*\alpha_n\cdots\alpha_1g|0\rangle$$
as the transformed KP tau functions under multi--step KP Darboux transformations \cite{jshe,Oevel1993}. Therefore, we have the following results.
\begin{proposition}\label{Pro_VEV_KP}
For $g\in GL_\infty$, $\alpha_i\in \oplus_{j\in\mathbb{Z}}\mathbb{C}\psi_j$, $\alpha_i^*\in \oplus_{j\in\mathbb{Z}}\mathbb{C}\psi_j^*$, we have the following VEV determinant formulas.
\begin{itemize}
  \item $n\geq m$,
\end{itemize}
\begin{align*}
&\frac{\langle n-m|e^{\mathcal{H}(\bm{x})}\alpha_m^*\cdots\alpha_1^*\alpha_n\cdots\alpha_1g|0\rangle}{\langle 0| e^{\mathcal{H}(\bm{x})} g|0\rangle}
=(-1)^{mn}\left|
    \begin{array}{ccc}
      \Omega(\textbf{q}_1,\textbf{r}_m) & \cdots & \Omega(\textbf{q}_n,\textbf{r}_m) \\
      \vdots & \vdots & \vdots \\
      \Omega(\textbf{q}_1,\textbf{r}_1) & \cdots & \Omega(\textbf{q}_n,\textbf{r}_1) \\
      \textbf{q}_1 & \cdots & \textbf{q}_n \\
      \vdots & \vdots & \vdots \\
     \textbf{q}_1^{(n-m-1)} & \cdots & \textbf{q}_n^{(n-m-1)} \\
    \end{array}
  \right|,
\end{align*}
where $f(\bm{x})^{(p)}=\partial_{x_1}^p(f(\bm{x}))$.
\begin{itemize}
  \item $n<m$,
  \begin{align*}
  	\frac{\langle n-m|e^{\mathcal{H}(\bm{x})}\alpha_m^*\cdots\alpha_1^*\alpha_n\cdots\alpha_1g|0\rangle}{\langle 0| e^{\mathcal{H}(\bm{x})} g|0\rangle}=(-1)^{mn+\frac{m(m-1)}{2}+\frac{n(n-1)}{2}}\left|
    \begin{array}{ccc}
      \Omega(\textbf{q}_n,\textbf{r}_1) & \cdots & \Omega(\textbf{q}_n,\textbf{r}_m) \\
      \vdots & \vdots & \vdots \\
      \Omega(\textbf{q}_1,\textbf{r}_1) & \cdots & \Omega(\textbf{q}_1,\textbf{r}_m) \\
      \textbf{r}_1 & \cdots & \textbf{r}_m \\
      \vdots & \vdots & \vdots \\
     \textbf{r}_1^{(m-n-1)} & \cdots & \textbf{r}_m^{(m-n-1)} \\
    \end{array}
  \right|.
  \end{align*}
\end{itemize}
\end{proposition}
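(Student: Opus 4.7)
The plan is to mirror the proof of Theorem~\ref{theorem:vev}, replacing $2$--Toda Darboux transformations by KP Darboux transformations. First I would fix notation by letting
\[
B_{i,j}(\bm{x})=\langle j-i|e^{\mathcal{H}(\bm{x})}\alpha_i^*\cdots\alpha_1^*\alpha_j\cdots\alpha_1g|0\rangle
\]
for $0\le i\le m$, $0\le j\le n$, so that the target VEV is $B_{m,n}$ while $B_{0,0}=\langle 0|e^{\mathcal{H}(\bm{x})}g|0\rangle$ is the reference KP tau function. The telescoping identity
\[
B_{m,n}=\frac{B_{m,n}}{B_{m-1,n}}\cdots\frac{B_{1,n}}{B_{0,n}}\cdot\frac{B_{0,n}}{B_{0,n-1}}\cdots\frac{B_{0,1}}{B_{0,0}}\cdot B_{0,0}
\]
presents the VEV as $B_{0,0}$ multiplied by successive ratios. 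Each intermediate state $\alpha_i^*\cdots\alpha_1^*\alpha_j\cdots\alpha_1g|0\rangle$ lies in some charge sector $\mathcal{F}_{j-i}$ and still solves the fermionic KP hierarchy \eqref{ferKP}, so by the lemma just proved the ``column'' ratio $B_{0,j}/B_{0,j-1}$ is (up to a sign) the KP eigenfunction $\textbf{q}_j^{[j-1]}$ attached to $\alpha_{j-1}\cdots\alpha_1g|0\rangle$, and the ``row'' ratio $B_{i,n}/B_{i-1,n}$ is the KP adjoint eigenfunction $\textbf{r}_i^{[n+i-1]}$ attached to $\alpha_{i-1}^*\cdots\alpha_1^*\alpha_n\cdots\alpha_1g|0\rangle$. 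Thus the VEV is a product of $n$ iterated eigenfunctions and $m$ iterated adjoint eigenfunctions acting on $B_{0,0}$.

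Next I would identify this product with the outcome of an $(n+m)$--step KP Darboux transformation. The three identities of the lemma are the fermionic shadows of, respectively, the binary Darboux transformation (whose building block is the antiderivative $\Omega(\textbf{q},\textbf{r})=\int\textbf{q}\textbf{r}\,dx_1$), the Wronskian Darboux transformation (whose building block is $\textbf{q}_1\textbf{q}_{2,x_1}-\textbf{q}_2\textbf{q}_{1,x_1}$), and the adjoint Wronskian Darboux transformation; these are precisely the elementary transformations studied in \cite{jshe,Oevel1993}. Performing $n$ eigenfunction insertions followed by $m$ adjoint insertions, the transformed tau function admits the standard mixed determinant representation: when $n\ge m$, the top $m$ rows are filled with $\Omega(\textbf{q}_j,\textbf{r}_i)$ (the binary blocks), and the remaining $n-m$ rows are $x_1$--derivatives of the $\textbf{q}_j$'s (the surplus Wronskian block); when $n<m$, the roles are swapped, with the residual adjoint Wronskian built on $\textbf{r}_i$ occupying the lower $m-n$ rows. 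This matches exactly the two determinants in the proposition.

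The main obstacle is sign bookkeeping. In the case $n\ge m$ the prefactor $(-1)^{mn}$ records the anticommutations needed to slide each $\alpha_i^*$ past the $\alpha_j$'s so that the binary pairings line up in canonical order; in the case $n<m$ the additional factors $(-1)^{m(m-1)/2}$ and $(-1)^{n(n-1)/2}$ reflect reversing the internal orders of the $\alpha$-- and $\alpha^*$--blocks so that the leftover adjoint block matches the adjoint Wronskian determinant convention used in \eqref{taunmIW*nm}. I would fix these signs by checking the small cases $(m,n)\in\{(1,0),(0,1),(1,1),(1,2),(2,1)\}$ directly from Wick's theorem and the lemma, and then promote the identity to all $(m,n)$ by induction on $m+n$: the first--row/column Laplace expansion of the displayed determinant reproduces exactly the recursion encoded in the telescoping product above, with the base case $B_{0,0}/B_{0,0}=1$. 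Once the signs are matched on the small cases, no further analytic content is needed beyond the lemma and the determinant representations of multi-step KP Darboux transformations from \cite{jshe,Oevel1993}.
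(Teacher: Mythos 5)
Your proposal follows essentially the same route as the paper: the paper proves Proposition \ref{Pro_VEV_KP} exactly by invoking the preceding lemma to recognize the VEV as a multi--step KP Darboux transformed tau function (mirroring the telescoping argument used for Theorem \ref{theorem:vev}) and then quoting the determinant representations of multi--step KP Darboux transformations from \cite{jshe,Oevel1993}. Your write--up is in fact more explicit than the paper's one--line justification, particularly on the telescoping decomposition and the sign bookkeeping, and contains no gap beyond what the paper itself leaves to the cited references.
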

\section{Conclusion and discussion}\label{sec:5}
In this paper, we firstly obtain the elementary Darboux transformation operators for the whole 2--Toda hierarchy, that is,
$$T_{+}(q)=\Lambda(q)\cdot\Delta\cdot q^{-1},\quad T_{-}(r)=\Lambda^{-1}(r)^{-1}\cdot\Delta\cdot r.$$
As far as we can know, although Darboux transformations for Toda lattice equations have been extensively investigated \cite{Nimmo1997}, there is little literature that discusses the Darboux transformations for the whole 2--Toda hierarchy. Then the VEV 
$$(-1)^{\frac{s(s-1)}{2}}
\langle s+k+n-m,-s|e^{H(\bt)}\beta_{n}\cdots\beta_1\beta_{m}^{*}\cdots\beta_{1}^{*}g|k\rangle$$
can be viewed as the transformed Toda tau function under multi--Darboux transformation $T^{[n,m]}$ from the initial tau function $\tau^{[0]}(\bt)=(-1)^{\frac{s(s-1)}{2}}
\langle s+k,-s|e^{H(\bt)}g|k\rangle$. Next Based upon this, determinant formulas for the above VEV are obtained in Theorem \ref{theorem:vev}.  Finally we discuss the VEV related to KP hierarchy, that is, 
$$\langle n-m|e^{\mathcal{H}(\bm{x})}\beta_m^*\cdots\beta_1^*\beta_n\cdots\beta_1g|0\rangle,$$
and the corresponding resuts are given in Proposition \ref{Pro_VEV_KP}. By now, we have given the determinant formulas for VEVs related with KP and Toda hierarchies by using Darboux transformations, which are just the methods of integrable systems. We believe above results can be generalized to the general case, that is, VEV
\eqref{p-KPvev}. And we will discuss this question in future.

\section*{Appendix}
The aim of this appendix is to show the determinant representation of
 multi--step Toda Darboux transformation defined by
 \begin{equation*}
 	T^{[n,m]}=T_{-}(r_m^{[n+m-1]})\cdots T_{-}(r_1^{[n]})\cdots T_{+}(q_n^{[n-1]})\cdots T_{+}(q_1),\ (n,m\in\mathbb{N}),
 \end{equation*}  
 where $T_{+}(q)=\Lambda(q)\cdot \Delta\cdot q^{-1}$ and $T_{-}(r)=\Lambda^{-1}(r)^{-1}\cdot\Delta^{-1}\cdot r$ are Toda Darboux transformations, and $q_i$ and $r_j$ are Toda eigenfunctions and adjoint eigenfunctions, respectively.
  
 If introduce general discrete Wronskian determinants as follows,
 \begin{align*}
 	&IW_{m,n}(r_m,\dots,r_1;q_1,\dots,q_{n})=
 	\begin{vmatrix}
 		\Delta^{-1}(q_1r_m)& \Delta^{-1}(q_2r_m)&\cdots&\Delta^{-1}(q_nr_m)\\
 		\vdots &  \vdots&\ddots & \vdots\\
 		\Delta^{-1}(q_1r_1)&\Delta^{-1}(q_2r_1) &\cdots&\Delta^{-1}(q_nr_1)\\
 		q_1&q_2&\cdots&q_n\\
 		\Delta(q_1)  &\Delta(q_2)&\cdots&\Delta(q_n)\\
 		\vdots & \vdots& \ddots & \vdots\\
 		\Delta^{n-m-1}(q_1)&\Delta^{n-m-1}(q_2) &\cdots& \Delta^{n-m-1}(q_n)
 	\end{vmatrix},  \ (n\geq m),   \\
 	&IW^{*}_{n,m}(q_n,\dots,q_1;r_1,\dots,r_m)=
 	\begin{vmatrix}
 		(\Delta^*)^{-1}(r_1q_n)&(\Delta^*)^{-1}(r_2q_n)&\cdots&(\Delta^*)^{-1}(r_mq_n)\\
 		\vdots &  \vdots&\ddots & \vdots\\
 		(\Delta^*)^{-1}(r_1q_1)&(\Delta^*)^{-1}(r_2q_1)&\cdots&(\Delta^*)^{-1}(r_mq_1)\\
 		r_1&r_2&\cdots&r_m\\
 		\Delta^*(r_1)& \Delta^*(r_2)&\cdots& \Delta^*(r_m)\\
 		\vdots &  \vdots&\ddots & \vdots\\
 		(\Delta^*)^{m-n-1}(r_1)& 	(\Delta^*)^{m-n-1}(r_2)&\cdots& 	(\Delta^*)^{m-n-1}(r_m)
 	\end{vmatrix},\ (n<m),
 \end{align*}
 then the determinant representations of $T^{[n,m]}$ and $(T^{[n,m]})^{-1}$ are given in the way below (see \cite{Liu2010,Song2023} for more details).
 \begin{itemize}
 	\item for $n\geq m$,
 	\begin{align}
 		&T^{[n,m]}=\frac{1}{IW_{m,n}(r_m,\dots,r_1;q_1,\dots,q_{n})}\begin{vmatrix}
 			\Delta^{-1}(q_1r_m)& \cdots&\Delta^{-1}(q_nr_m)&\Delta^{-1}r_m\\
 			\vdots &  \ddots & \vdots&\vdots\\
 			\Delta^{-1}(q_1r_1) &\cdots&\Delta^{-1}(q_nr_1)&\Delta^{-1}r_1 \\
 			q_1&\cdots&q_n&1\\
 			\Delta(q_1)  &\cdots&\Delta(q_n)&\Delta\\
 			\vdots & \ddots & \vdots&\vdots\\
 			\Delta^{n-m}(q_1) &\cdots& \Delta^{n-m}(q_n)&\Delta^{n-m}
 		\end{vmatrix}, \label{detT1} \\
 		&(T^{[n,m]})^{-1}=\begin{vmatrix}
 			q_1\Delta^{-1}&\Lambda\Delta^{-1}(q_1r_m)&\cdots&\Lambda\Delta^{-1}(q_1r_1)&\Lambda(q_1)&\cdots&\Lambda\Delta^{n-m-2}(q_1)\\
 			q_2\Delta^{-1}&\Lambda\Delta^{-1}(q_2r_m)&\cdots&\Lambda\Delta^{-1}(q_2r_1)&\Lambda(q_2)&\cdots&\Lambda\Delta^{n-m-2}(q_2)\\
 			\vdots&\vdots&\ddots&\vdots&\vdots&\ddots&\vdots\\
 			q_n\Delta^{-1}&\Lambda\Delta^{-1}(q_nr_m)&\cdots&\Lambda\Delta^{-1}(q_nr_1)&\Lambda(q_n)&\cdots&\Lambda\Delta^{n-m-2}(q_n)
 		\end{vmatrix}\times\nonumber\\
 		&\qquad\qquad \times\frac{(-1)^{n-1}}{\Lambda\left(IW_{m,n}(r_m,\dots,r_1;q_1,\dots,q_{n})\right)},\label{detT2}
 	\end{align}
 \end{itemize} 
 \begin{itemize}
 	\item for $n=m$,
 	\begin{align}
 		&T^{[n,n]}=\frac{1}{IW_{n,n}(r_n,\dots,r_1;q_1,\dots,q_n)}\begin{vmatrix}
 			\Delta^{-1}(q_1r_n)&\cdots&	\Delta^{-1}(q_nr_n)&\Delta^{-1}r_n\\
 			\vdots&\ddots&\vdots&\vdots\\
 			\Delta^{-1}(q_1r_1)&\cdots&	\Delta^{-1}(q_nr_1)&\Delta^{-1}r_1\\
 			q_1&\cdots&q_n&1
 		\end{vmatrix},\label{nemT}\\
 		&(T^{[n,n]})^{-1}=\begin{vmatrix}
 			-1&r_n&\cdots&r_1\\
 			q_1\Delta^{-1}&\Lambda\Delta^{-1}(q_1r_n)&\cdots&\Lambda\Delta^{-1}(q_1r_1)\\
 			\vdots&\vdots&\ddots&\vdots\\
 			q_n\Delta^{-1}&\Lambda\Delta^{-1}(q_nr_n)&\cdots&\Lambda\Delta^{-1}(q_nr_1)
 		\end{vmatrix}
 		\frac{-1}{\Lambda\left(IW_{n,n}(r_n,\dots,r_1;q_1,\dots,q_n) \right)},\label{nemTi}
 	\end{align}
 \end{itemize}
 \begin{itemize}
 	\item for $n<m$,
 	\begin{align}
 		&T^{[n,m]}=\frac{1}{\Lambda^{-1}\left( IW^{*}_{n,m}(q_n,\dots,q_1;r_1,\dots,r_m) \right)}\times\nonumber\\
 		&\qquad\ \
 		\times\begin{vmatrix}
 			\Lambda^{-1}(\Delta^{*})^{-1}(r_1q_n)&\cdots&\Lambda^{-1}(\Delta^{*})^{-1}(r_1q_1)&\Lambda^{-1}(r_1)&\cdots&\Lambda^{-1}(\Delta^{*})^{m-n-2}(r_1)&\Delta^{-1}r_1\\
 			\vdots&\ddots &\vdots &\vdots &\ddots &\vdots &\vdots\\
 			\Lambda^{-1}(\Delta^{*})^{-1}(r_mq_n)&\cdots&\Lambda^{-1}(\Delta^{*})^{-1}(r_mq_1)&\Lambda^{-1}(r_m)&\cdots&\Lambda^{-1}(\Delta^{*})^{m-n-2}(r_m)&\Delta^{-1}r_m
 		\end{vmatrix},\label{nlmT}\\
 		&(T^{[n,m]})^{-1}=\begin{vmatrix}
 			q_n\Delta^{-1}&(\Delta^{*})^{-1}(r_1q_n)&\cdots&(\Delta^{*})^{-1}(r_mq_n)\\
 			\vdots&\vdots&\ddots&\vdots\\
 			q_1\Delta^{-1}&(\Delta^{*})^{-1}(r_1q_1)&\cdots&(\Delta^{*})^{-1}(r_mq_1)\\
 			1&r_1&\cdots&r_m\\
 			\Delta&\Delta^{*}(r_1)&\cdots&\Delta^{*}(r_m)\\
 			\vdots&\vdots&\ddots&\vdots\\
 			\Delta^{m-1}&(\Delta^{*})^{m-n}(r_1)&\cdots&(\Delta^{*})^{m-n}(r_m)
 		\end{vmatrix}\frac{(-1)^{m}}{IW^{*}_{n,m}(q_n,\dots,q_1;r_1,\dots,r_m)},\label{nlmTi}
 	\end{align}
 \end{itemize}
 where the determinants of \eqref{detT1}, \eqref{nemT}, and \eqref{nlmT} are expanded using their last columns, with the elements appearing on the right-hand side, whereas those of \eqref{detT2}, \eqref{nemTi}, and \eqref{nlmTi} are expanded by their first columns, with the elements appearing on the left-hand side.\\
 
\noindent{\bf Acknowledgements}: { This work is supported by the National Natural Science Foundations of China (Nos. 12171472 and 12261072) and ``Qinglan Project" of Jiangsu Universities. }\\

\noindent{\bf Conflict of Interest}:
The author have no conflicts to disclose.\\

\noindent{\bf Data availability}: Date sharing is not applicable to this article as no new data were created or analyzed in this study.\\

\end{document}